\newtheorem{tm}{Theorem}
\newtheorem{pr}[tm]{Proposition}
\newtheorem{cor}[tm]{Corollary}
\theoremstyle{definition}
\newcommand {\CalM} {\mathcal M}
\newcommand {\BR}   {\mathbb R}
\theoremstyle{plain}
\newtheorem{theorem}{Theorem}
\theoremstyle{remark}
\newtheorem{remark}[theorem]{Remark}
\theoremstyle{definition}
\def\Beq#1#2\Eeq{%
        \begin{equation}%
        \label{#1}%
            #2%
        \end{equation}%
    }
\title{Harmonic locus and Calogero-Moser spaces}
\author{Giovanni Felder}
\address{Department of Mathematics,
ETH Zurich, Switzerland}
\email{giovanni.felder@math.ethz.ch}
\author{Alexander P. Veselov}
\address{Department of Mathematical Sciences,
Loughborough University, Loughborough, UK}
\email{A.P.Veselov@lboro.ac.uk}
\let\@wraptoccontribs\wraptoccontribs
\address{Simons Center for Geometry and Physics,
Yang Institute for Theoretical Physics,
Stony Brook University, Stony Brook NY 11794-3636, USA}
\email{nnekrasov@scgp.stonybrook.edu}
\begin{document}

\maketitle

\begin{abstract}
We study the harmonic locus consisting of the monodromy-free Schr\"odinger operators with rational potential and quadratic growth at infinity. It is known after Oblomkov that it can be identified with the set of all partitions via the Wronskian map for Hermite polynomials.
We show that the harmonic locus can also be identified with the subset of the Calogero--Moser space introduced by Wilson, which is fixed by the symplectic action of $\mathbb C^\times.$ As a corollary, for the multiplicity-free part of the locus we effectively solve the inverse problem for the Wronskian map by describing the partition in terms of the spectrum of the corresponding Moser matrix. We also compute the characters of the $\mathbb C^\times$-action at the fixed points, proving, in particular, a conjecture of Conti and Masoero.
In the Appendix written by N. Nekrasov there is an alternative proof of this result, based on the space of instantons and ADHM construction.
 \end{abstract}

\section{Introduction}
One of the most important open problems in quantum integrable systems
is the classification of Schr\"odinger operators with trivial
monodromy.  The corresponding singular set is called locus
configuration.

In dimension one we have the Schr\"odinger equation with a meromorphic
potential $V(z), \, z\in \mathbb C$:
\begin{equation}
\label{static}{}
(-D^2+V(z)) \psi = \lambda \psi, \quad D= \frac{d}{dz}.
\end{equation}
We say that such operator has {\it trivial monodromy} if all the solutions $\psi$ of the corresponding equation \eqref{static} are meromorphic in $z \in \mathbb C$ for {\it all} $\lambda$.
The general problem is to describe all such potentials.

The first results in this direction were found by
Duistermaat and Gr\"unbaum  \cite{DG}, who solved the problem in the class of rational potentials decaying at infinity
$$ 
V=\sum_{i=1}^{n}\frac{m_i(m_i+1)}{(z-z_i)^2}.
$$
They have shown that the corresponding parameters $m_i$ must be integer and that all such potentials are the results of Darboux transformations applied to the zero potential. The potentials  are therefore are given by the Burchnall--Chaundy (or Adler--Moser) explicit formulas \cite{AM, BCh}. 
The corresponding configurations of poles $z_i$ are very special: in the case when all the parameters $m_i=1$ they are none other than the (complex) equilibriums of the Calogero--Moser system with the Hamiltonian
$$
H=\frac12\sum_{i=1}^{n} p_i^2 +
\sum_{i \neq j}^{n}\frac{1}{(z_i-z_j)^2}
$$
and are described by the following algebraic system
$$
\label{eCM}
 \sum_{j\ne i}^{n}\frac{ 1}{(z_i - z_j)^3}=0,\quad  i=1,\dots, {n}.
$$
A remarkable fact discovered first by Airault, McKean and Moser \cite{AMM}) is that this system has no solutions unless the number of particles ${n} = \frac{m(m+1)}{2}$ is a triangular number, in which case the solutions depend on $m$ arbitrary complex parameters. 

Oblomkov \cite{Oblomkov} generalised the Duistermaat--Gr\"unbaum result to the harmonic case, when the rational potentials have quadratic growth at infinity
$$ 
\label{harmonic}
V=z^2 + \sum_{i=1}^{n}\frac{m_i(m_i+1)}{(z-z_i)^2}, \, m_i \in \mathbb Z.
$$ 
He proved that all such potentials can be found by applying Darboux transformations to the harmonic oscillator and explicitly described them via the Wronskians of the Hermite polynomials
$$
u(z)=z^2-2D^2\log W(H_{k_1},\dots, H_{k_l}), \,\, k_1>k_2>\dots >k_l>1.
$$
In \cite{FHV12} we studied the geometry of pole configuration of these potentials in relation with the Young diagrams of the corresponding partitions $$\lambda=(\lambda_1,\dots, \lambda_l), \,\, \lambda_j=k_j-l+j.$$
We were motivated by the following natural question: 
 {\it given the pole set of the potential $u(z)$ from the harmonic locus, how can one find the corresponding partition $\lambda$?}
 
For the so-called doubled partitions we observed numerically that the Young diagram can be ``seen" directly from the shape of the pole set (see Fig. \ref{f-1} below).

In this paper we  show that the harmonic locus can also be identified with a subset of the Calogero--Moser space introduced by Wilson \cite{Wilson}, namely with the
subset fixed by a natural symplectic action of $\mathbb C^\times=\mathbb C\setminus 0.$ As a corollary, for the multiplicity-free part of the locus we answer the question above by describing the partition $\lambda$ explicitly in terms of the spectrum of the corresponding Moser matrix $M$ (see Theorem 1 below).

In the last section we compute the characters of the $\mathbb C^\times$-action at the fixed points. As a corollary, we prove a conjecture by Conti and Masoero \cite{CM} about spectrum of the Hessian matrix $K(\lambda)$ of the Calogero--Moser potential 
$$
K_{ij}(\lambda)=\delta_{ij}\left(1+\sum_{l\neq j}\frac{6}{(z_l-z_j)^4}\right)-(1-\delta_{ij})\frac{6}{(z_i-z_j)^4}, 
$$
where $z_i=z_i(\lambda)$ are the roots of the corresponding Hermite Wronskian $W_\lambda(z)$, which are assumed to be simple.
We prove that in such case the eigenvalues of $K(\lambda)$ have the form
$$
\mathit{Spec} \, K(\lambda)=\{(\lambda_{l(\Box)+1}-c(\Box))^2,\quad \Box \in \lambda\},
$$
where $l(\Box)$ is the leg length of $\Box \in \lambda$ (the number of
boxes of $\lambda$ below $\Box$). This statement is equivalent to
Conjecture 6.3 from \cite{CM}.  For the roots of the usual Hermite
polynomial $H_n(z)$, corresponding to one-row Young diagram, this
agrees with the well-known result that the frequencies of the small
oscillations of Calogero--Moser systems near the corresponding
equilibrium are $1, 2,\dots, n$ (see \cite{Per2, ABCOP}).

An alternative proof of this result based on the space of instantons and ADHM construction is given in the Appendix to this paper, written by N. Nekrasov. 
This implies, in particular, that the formula for the spectrum of $K(\lambda)$ can be rewritten in a more natural way as the set of 
  squares of the hook lengths of $\lambda$:
$
\mathit{Spec} \, K(\lambda)=\{h(\Box)^2,\,\, \Box \in \lambda\}.
$

\section{The harmonic locus and partitions}

By the {\it harmonic locus} $\mathcal{HL}$ we mean the set of the potentials 
\Beq{HL}{}
    u(z) =z^2+ \sum_{i=1}^{n} \frac{m_i(m_i + 1)}{(z - z_i)^2}, \quad z\in \mathbb C,
\Eeq
of the Schr\"odinger operator $L=-D^2+u(z),$ having trivial monodromy in the complex domain, see \cite{DG,V}.
The terminology is inspired by the work of Airault, McKean and Moser \cite{AMM}. The parameters $m_i$ here must be positive integers called multiplicities. The harmonic locus decomposes as a disjoint union $\mathcal{HL}=\cup_{n=0}^\infty\mathcal{HL}_{n}$ according to the number of poles.

Oblomkov \cite{Oblomkov} proved that all corresponding potentials have the form
\[
u(z)=z^2-2D^2\log W(H_{k_1},\dots, H_{k_l}),
\]
where $k_1>k_2>\dots >k_l$ is a
sequence of different positive integers and 
$W (H_{k_1},\dots, H_{k_l})$ is the Wronskian $\det((D^{i-1}H_{k_j})_{i,j})$
 of the corresponding Hermite polynomials $H_k(z)=(-1)^ke^{z^2}D^ke^{-z^2}$:
\[
  H_0(z)=1,\,\,H_1(z)=2z,\,\, H_2(z)=4z^2-2, \,\, H_3(z)=8z^3-12 z, \dots
\]
Moreover different sequences correspond to different potentials.

It is convenient \cite{FHV12} to label these Wronskians $W=W_\lambda(z)$ by the partitions $\lambda=(\lambda_1, \dots, \lambda_l),\, \lambda_1\geq\dots \geq \lambda_l\geq 1,$ such that 
$$k_1=\lambda_1+l-1,\, k_2=\lambda_{2}+l-2,\, \dots,k_{l-1}=\lambda_{l-1}+1,\, k_l=\lambda_l.$$

In \cite{FHV12} it was shown that the Wronskians $W_{\lambda}$ have the following properties:

\medskip

{\it  1. $W_{\lambda}(z)$ is a polynomial in $z$ of degree $|\lambda|=\lambda_1+\lambda_2+\dots +\lambda_l,$
 
 2. $W_{\lambda}(-z)=(-1)^{|\lambda|} W_{\lambda}(z),$
 
 3. $W_{\lambda^*}(z) = (-i)^{|\lambda|} W_{\lambda}(iz),$ where $\lambda^*$ is the conjugate of $\lambda$.}
 
 \medskip
 
 Let $\mathcal P$ be the set of all partitions which we also identify with
 the set of all Young diagrams. By Oblomkov's result
 we have a well-defined bijection $\mathcal W\colon \mathcal P \to \mathcal{HL}$,
 \[
   \mathcal W\colon \lambda \in \mathcal P \to u_\lambda(z)=z^2-2D^2\log W_\lambda(z) \in \mathcal{HL}.
\]

We have the following natural question. Given the potential $u(z)$ from the harmonic locus, how can one find the partition $\lambda$, such that $u(z)=u_\lambda(z)$?
In other words, how to invert the map $\mathcal W$?

For the locus potentials \eqref{HL} with all the multiplicities $m_i$ equal to 1
\[ 
    u(z) =z^2+ \sum_{i=1}^{n} \frac{2}{(z - z_i)^2}
\]
(which we will call {\it simple locus potentials}), we have the following answer, conjectured by the second author in 2012. 

Recall that the {\it content} $c(\Box)$ of the box $\Box=(i,j)$ from the Young diagram $\lambda$ is defined as $j-i$, see Fig.~\ref{f-0}.
It is easy to see that the multiset of contents $$C(\lambda):=\{c(\Box), \,\, \Box \in \lambda\}$$ determines the partition $\lambda$ uniquely.

\begin{tm}\label{t-1} For a simple locus potential $u(z)$, the corresponding
  partition $\lambda$ can be uniquely characterized by the property
  that the contents of $\lambda$ coincide with the eigenvalues of
  Moser's matrix $M$ \[ 
    C(\lambda)= \mathit{Spec} \,M, \quad M_{ij} =
        \begin{cases}
        - \frac{1}{(z_i - z_j)^2} & i \neq j\\
        \sum_{k \neq j}^n \frac{1}{(z_k - z_j)^2} & i=j.
        \end{cases}
\]
\end{tm}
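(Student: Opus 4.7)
The strategy is to realize $\mathcal{HL}_n$ as the fixed locus of a natural symplectic $\mathbb C^\times$-action on Wilson's Calogero--Moser space $\mathcal C_n$, and then to read off the spectrum of the Moser matrix from the $\mathbb C^\times$-weights at each fixed point. I take the identification $\mathcal{HL}_n \cong (\mathcal C_n)^{\mathbb C^\times}$ as given, since it is one of the principal constructions of the paper. Concretely, on the principal stratum of $\mathcal C_n$ --- where $X$ has $n$ distinct eigenvalues --- one can choose representatives with $X = \operatorname{diag}(z_1, \ldots, z_n)$ and, by the rank-one relation $[X,Y] + I = vw^T$, $Y_{ij} = 1/(z_i - z_j)$ for $i \neq j$ with $Y_{ii} = p_i$ free. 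A simple locus potential $u(z) = z^2 + \sum 2/(z - z_i)^2$ corresponds to the choice $p_i = 0$, and the Duistermaat--Gr\"unbaum trivial-monodromy condition at each $z_i$ is precisely the condition for this point to be fixed by the symplectic flow of the harmonic oscillator Hamiltonian $\tfrac{1}{2}\operatorname{tr}(X^2+Y^2)$ on $\mathcal C_n$.

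The next step invokes the classification of $\mathbb C^\times$-fixed points on $\mathcal C_n$: there are finitely many, in canonical bijection with partitions $\lambda \vdash n$, in complete analogy with the Bia\l{}ynicki--Birula decomposition of $\operatorname{Hilb}^n(\mathbb C^2)$ and its monomial-ideal fixed points $I_\lambda$. At the fixed point indexed by $\lambda$, the stabilizing one-parameter subgroup $g(s) \subset GL_n$ acts on the tautological space $\mathbb C^n$ via a representation whose character is the standard $\sum_{\Box \in \lambda} s^{c(\Box)}$. In particular, its infinitesimal generator $\xi := g'(1) \in \mathfrak{gl}_n$ has spectrum exactly the multiset $C(\lambda) = \{c(\Box) : \Box \in \lambda\}$.

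It remains to identify $\xi$ with the Moser matrix $M$ in the principal stratum. Differentiating the fixed-point condition at $s = 1$ yields $[\xi, X] = Y$ and $[\xi, Y] = -X$ (up to an overall sign). With $X = \operatorname{diag}(z_i)$, the first relation immediately forces $\xi_{ij}(z_j - z_i) = 1/(z_i - z_j)$, i.e.\ $\xi_{ij} = -1/(z_i - z_j)^2 = M_{ij}$ for $i \neq j$. The off-diagonal component of the second relation, simplified by partial fractions, expresses $\xi_{ii} - \xi_{jj}$ as a sum $\sum_{k \neq i,j}(z_i + z_j - 2z_k)/[(z_i-z_k)^2(z_k-z_j)^2]$ --- the very same sum that arises when one computes $(M_{ii} - M_{jj})/(z_j - z_i)$ directly --- so that $\xi_{ii} - \xi_{jj} = M_{ii} - M_{jj}$, and the overall additive scalar is pinned down by the natural normalization of the $\mathbb C^\times$-representation. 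Hence $\xi = M$ on the nose, so $\mathit{Spec}\, M = \mathit{Spec}\, \xi = C(\lambda)$, and uniqueness of $\lambda$ from its multiset of contents is elementary. The most delicate part of the plan is this final identification: the off-diagonal equality falls out immediately, but closing up the diagonal requires using the AMM equations characterizing the fixed locus, and one must carefully track the sign and normalization conventions (orientation of the symplectic form, parametrization of the $\mathbb C^\times$-action, attractive versus repulsive Calogero--Moser coupling) to ensure $\xi$ matches $+M$ rather than some rescaling.
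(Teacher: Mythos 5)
Your overall strategy is the paper's: realize $\mathcal{HL}_n$ as the $\mathbb C^\times$-fixed locus of Wilson's space, identify the Moser matrix with the infinitesimal generator $\xi$ of the one-parameter subgroup of $\mathit{GL}_n$ stabilizing a fixed point, and read off $\mathrm{Spec}\,\xi$ from the partition labeling the fixed point. Two steps carrying real content are missing, however. First, the claim that the stabilizer acts on $\mathbb C^n$ with character $\sum_{\Box\in\lambda}s^{c(\Box)}$ is not a formal consequence of the analogy with $\operatorname{Hilb}^n(\mathbb C^2)$: it must be verified on Wilson's explicit normal form $(X_\lambda,Z_\lambda,v_\lambda,w_\lambda)$. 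The paper does this in Proposition \ref{p-6} by exhibiting $M_\lambda$ as a diagonal matrix whose block for the $i$-th diagonal hook is $\operatorname{diag}(-l_i,\dots,-1,0,1,\dots,a_i)$ and checking relations (II)--(IV) directly; only then does $\mathrm{Spec}\,M=C(\lambda)$ follow. Second, and more seriously, even a completed version of your argument yields $\mathrm{Spec}\,M=C(\mu)$ where $\mu$ is \emph{Wilson's} label of the fixed point, whereas Theorem \ref{t-1} concerns the partition $\lambda$ attached to $u$ through Oblomkov's bijection, i.e.\ through the Hermite Wronskian $W_\lambda$. You must prove that these two labelings agree; this is Proposition \ref{p-8}, which shows $\det(zI-Q_\lambda)=A(\lambda)W_\lambda(z)$ by combining Wilson's determinantal formula (6.14) for Schur functions with the Bonneux--Hamker--Stembridge--Stevens expression of Wronskians of Appell polynomials as specialized Schur functions. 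Nothing in your proposal supplies this identification, and without it the statement about ``the corresponding partition'' is not reached.

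A smaller but genuine defect is your treatment of the diagonal of $\xi$. Adding $cI$ to $\xi$ preserves every commutator relation, so no manipulation of $[\xi,X]$ and $[\xi,Y]$ can pin down the diagonal beyond the differences $\xi_{ii}-\xi_{jj}$; the missing input is the differentiated framing condition $g(t)v=v$, $wg(t)^{-1}=w$, giving $\xi v=0$ and $w\xi=0$, which with $v=w^T=e$ forces vanishing row sums and hence exactly the Moser diagonal $\sum_{k\neq i}(z_i-z_k)^{-2}$. Relatedly, with your normalization $[\xi,X]=Y$, $[\xi,Y]=-X$ the combinations $X\pm iY$ have $\mathrm{ad}_\xi$-eigenvalues $\mp i$, so your generator is $\pm i$ times the one with integer weights; the paper's conventions $[M,Q]=L$, $[M,L]=Q$ give $[M,L\pm Q]=\pm(L\pm Q)$ and avoid this. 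These are exactly the ``sign and normalization'' issues you flag at the end, but they must be resolved, not merely noted, for the spectrum to come out as the integer multiset $C(\lambda)$ rather than a shift or rescaling of it.
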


One of the motivations for this answer came from the results of the paper \cite{ABCOP}, where it was shown that when $z_1,\dots, z_{n}$ are the zeros of the Hermite polynomial $H_{n}(z)$ the matrix $M$ has the eigenvalues $0,1,2,\dots, {n}-1,$ which is the content set of the one-row Young diagram. (Calogero conjectured that the converse is also true for the matrices of this form, but he later proved that this does not hold for ${n}=4$, see \cite{C}.)

Another motivation came from the following result of Perelomov \cite{Per}, which explains why we should expect the spectrum of $M$ to be integer.

It is known after Duistermaat and Gr\"unbaum \cite{DG} that the poles $z_1, \dots, z_{n}$ of such potentials satisfy the following {\it locus conditions} (see \cite{V}):
\[ 
\sum_{j \neq i}^{n} \frac{2}{(z_i - z_j)^3} -z_i = 0, \,\, i=1,\dots,{n},
\]
which are necessary and sufficient conditions for trivial monodromy. Note that these conditions are simply the equilibrium conditions 
\Beq{equil}{}
\frac{\partial}{\partial z_i} U(z)=0, \, i=1,\dots, {n} 
\Eeq
for the Calogero--Moser system with the Hamiltonian
\Beq{CM}{}
H=\frac{1}{2}\sum_{i=1}^{n} p_i^2+U(q), \quad U(q)=\frac{1}{2}\sum_{i=1}^{n}q_i^2+\sum_{1\leq i<j\leq {n}} \frac{1}{(q_i - q_j)^2}.
\Eeq

Following \cite{M,Per} introduce matrix $L$ with
   \[ 
   L_{ij}=\frac{1-\delta_{ij}}{q_i - q_j} 
 \]
 and the matrices
 \[ 
L^{\pm}:=L\pm Q, \quad Q_{ij}=q_i \delta_{ij}.
\]

\begin{pr} (Perelomov \cite{Per, Per2})
The equilibrium conditions \eqref{equil} are equivalent to the matrix relations
 \[ 
 [M,L^\pm] = \pm L^\pm.
 \]
 \end{pr}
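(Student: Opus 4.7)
The plan is to observe first that the two matrix relations $[M,L^\pm]=\pm L^\pm$ are \emph{each} equivalent to the single relation $[M,L]=Q$. Indeed, since $L^\pm=L\pm Q$,
\[
[M,L^\pm]\mp L^\pm \;=\; [M,L]\pm[M,Q]\mp L\mp Q,
\]
so in each sign pattern the equation $[M,L^\pm]=\pm L^\pm$ reads $[M,L]\pm[M,Q]=\pm L+Q$. Hence once we verify the ``free'' identity $[M,Q]=L$, both matrix equations collapse to $[M,L]=Q$, and the work reduces to analyzing this one commutator.

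The identity $[M,Q]=L$ is a one‑line check: $[M,Q]_{ij}=(q_j-q_i)M_{ij}$, which vanishes on the diagonal and equals $(q_j-q_i)\bigl(-1/(q_i-q_j)^2\bigr)=1/(q_i-q_j)=L_{ij}$ off it.

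Next I would split the commutator $[M,L]$ into diagonal and off‑diagonal pieces. For $i\neq j$, isolating the contributions $k=i$, $k=j$, and $k\neq i,j$ gives
\[
[M,L]_{ij}=\frac{M_{ii}-M_{jj}}{q_i-q_j}+\sum_{k\neq i,j}\!\Bigl(-\tfrac{1}{(q_i-q_k)^2(q_k-q_j)}+\tfrac{1}{(q_i-q_k)(q_k-q_j)^2}\Bigr).
\]
After substituting $M_{ii}-M_{jj}=\sum_{k\neq i,j}\!\bigl(\tfrac{1}{(q_i-q_k)^2}-\tfrac{1}{(q_j-q_k)^2}\bigr)$ (the $k=j$ and $k=i$ terms cancel since $(q_i-q_j)^2=(q_j-q_i)^2$), each $k$‑summand, written in the variables $A=q_i-q_k$, $B=q_j-q_k$ with $A-B=q_i-q_j$, reduces to
\[
\frac{-(A+B)}{A^2B^2}+\frac{1}{A^2B}+\frac{1}{AB^2}=0.
\]
Thus the off‑diagonal entries of $[M,L]$ vanish \emph{identically}, without any equilibrium hypothesis.

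Finally, for the diagonal I compute
\[
[M,L]_{ii}=\sum_{k\neq i}\bigl(M_{ik}L_{ki}-L_{ik}M_{ki}\bigr)=\sum_{k\neq i}\frac{2}{(q_i-q_k)^3},
\]
using that $M_{ik}=M_{ki}$. The diagonal of $Q$ at site $i$ is $q_i$, so $[M,L]_{ii}=Q_{ii}$ holds precisely when $q_i=\sum_{k\neq i}\tfrac{2}{(q_i-q_k)^3}$, which is exactly the equilibrium equation $\partial U/\partial q_i=0$. Combined with the previous paragraph this gives $[M,L]=Q$ iff the full set of equilibrium conditions \eqref{equil} holds, and via the reduction above, equivalently $[M,L^\pm]=\pm L^\pm$. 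The only nontrivial step is the off‑diagonal cancellation, and even there the main obstacle is purely bookkeeping: isolating the $k=i,j$ terms so that the remaining sum over $k\neq i,j$ factors into an expression that vanishes pointwise in $k$.
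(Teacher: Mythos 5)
Your proof is correct, and it is essentially the approach the paper takes: the paper gives no details beyond calling this ``a modification of the direct check by Moser,'' and your verification is exactly that direct check, cleanly organized through the two relations $[M,Q]=L$ and $[M,L]=Q$, which are precisely relations (II) and (III) of the paper's modified Calogero--Moser space.
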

 
 The proof is a modification of the direct check by Moser \cite{M}.

Thus matrices $L^\pm$ can be viewed as raising/lowering operators for $M$. Since $Me=0$ for $e=(1,\dots,1)^T$ the spectrum of $M$ is integer provided $e$ is cyclic vector for the action of $L^\pm.$

In the paper we establish a direct link of harmonic locus with the Calogero--Moser spaces introduced by Wilson \cite{Wilson}, and as a corollary we provide a proof of Theorem \ref{t-1}.

\section{The harmonic locus and Calogero--Moser spaces}

The Calogero--Moser spaces were introduced by Wilson \cite{Wilson}, who was inspired by the earlier work by Kazhdan, Kostant and Sternberg on the moment map interpretation of the Calogero--Moser systems \cite{KKS}.

Recall that Moser \cite{M} discovered that the system from \cite{Cal71} describing pairwise interacting particles on the line with the Hamiltonian
\[ 
H=\frac{1}{2}\sum_{i=1}^n p_i^2 +\sum_{i<j}^{n} \frac {\gamma^2}{(q_i-q_j)^2}
\]
(now called {\it rational Calogero--Moser system})
can be rewritten in matrix form as
\[ 
\dot L = [L,M],
\]
where
$$
L= \begin{pmatrix}
p_1  &  \frac{\gamma i}{q_1-q_2} &   \frac{\gamma i}{q_1-q_3}   &  \dots  & \dots &  \frac{\gamma i}{q_1-q_n}   &  \\
-\frac{\gamma i}{q_1-q_2}  &  p_2 &  \frac{\gamma i}{q_2-q_3}  & & \dots  & \frac{\gamma i}{q_2-q_n}   & \\
\dots   & \dots & \dots &\dots &  \dots &  \dots &\\
-\frac{\gamma i}{q_1-q_n}    & -\frac{\gamma i}{q_2-q_n} & -\frac{\gamma i}{q_3-q_n}& \dots & -\frac{\gamma i}{q_{n-1}-q_n}  & p_n
\end{pmatrix}
$$
$$
M= -i\gamma\begin{pmatrix}
a_{11}  &  \frac{1}{(q_1-q_2)^2} &  \frac{1}{(q_1-q_3)^2}   &  \dots  & \dots  &\frac{1}{(q_1-q_n)^2}  &  \\
\frac{1}{(q_1-q_2)^2}  &  a_{22} &   \frac{1}{(q_2-q_3)^2}   & \dots &  \dots & \frac{1}{(q_2-q_n)^2}& \\
 \dots  & \dots & \dots &\dots &  \dots &   \dots & \\
\frac{1}{(q_1-q_n)^2}    & \frac{1}{(q_2-q_n)^2} & \frac{1}{(q_3-q_n)2}& \dots & \frac{1}{(q_{n-1}-q_n)^2}  & a_{nn}
\end{pmatrix}
$$
with $a_{ii}=-\sum_{i\neq j}^n  \frac{1}{(q_i-q_j)^2}.$

Following Wilson \cite{Wilson}, we will consider the case when $\gamma=-i$ with attractive (rather than repulsive) potential and
\[ 
H=\frac{1}{2}\sum_{i=1}^n p_i^2 -\sum_{i<j}^{n} \frac {1}{(q_i-q_j)^2},
\]
when the particles may collide. One can view this also as changing time $t \to it, \, i=\sqrt{-1}.$

Note that the corresponding matrix $L$ satisfies the commutation relation
\[ 
[L,Q]=I-ee^T,
\]
where $Q$ is the diagonal matrix with $Q_{ij}=q_i \delta_{ij}$, $I$ is the identity matrix
and the
vector $e$ has all coordinates equal to 1, so $e^T=(1,1,\dots,1)$. Moreover, if $Q$ has simple spectrum, then this commutation relation determines the form of the off-diagonal elements of $L$ uniquely as $L_{ij}=(q_i-q_j)^{-1}.$

Kazhdan, Kostant and Sternberg \cite{KKS} interpreted this relation as a moment map and described the rational Calogero--Moser system as the corresponding symplectic reduction of the free motion on the Lie algebra, which in the repulsive case is the unitary Lie algebra $\mathfrak u_n$. 

Wilson considered a natural complex generalisation of this procedure and introduced the {\it Calogero--Moser space} $\mathcal C_n$ as the quotient space
\[ 
\mathcal C_n=\{(X,Z,v,w): [X,Z]+I=vw\}/\mathit{GL}_n(\mathbb C),
\]
where $X$ and $Z$ are $n$ by $n$ complex matrices, $v$ and $w$ are $n$-dimensional vector and covector (considered as $n\times 1$ and $1\times n$ matrices respectively). The element  $g\in \mathit{GL}_n$ acts as
\[
  (X,Z,v,w)\mapsto (gXg^{-1},gZg^{-1},gv,wg^{-1}).
\]
Wilson showed that $\mathcal C_n$ is a smooth irreducible affine algebraic variety of
dimension $2n$
with many remarkable properties and can be viewed as the quantisation of the Hilbert scheme of $n$ points in the complex plane.

We introduce now the {\it modified Calogero--Moser space} $\mathcal {CM}_n$ as the quotient
\[ 
\mathcal{CM}_n=\{\Pi=(L,Q,M, v, w)\}/\mathit{GL}_n(\mathbb C),
\]
where $L,Q,M$ are $n$ by $n$ complex matrices, $v$ and $w$ are a vector and covector as before, which satisfy the following relations
\begin{align*}
\mathrm{(I)}&: \quad  [L,Q]=I-vw,\\
\mathrm{(II)}&:  \quad [M,Q]=L,\\
\mathrm{(III)}&: \quad [M,L]=Q,\\
\mathrm{(IV)}&: \quad Mv=0, \quad wM=0.
\end{align*}
The group $\mathit{GL}_n$ acts by conjugation on $L,Q,M$ and on $v$, $w$ as before.
 
 Our main result is the following theorem.

\begin{tm}\label{t-3} 
The modified Calogero--Moser space $\mathcal {CM}_n$ is discrete and can be identified with the harmonic locus (and thus with the set of partitions of $n$) via the map $\chi: \mathcal {CM}_n \to \mathcal {HL}_n,$
\[ 
\chi(\Pi)=z^2-2D^2\log \det(zI-Q).
\]
The spectrum of the matrix $M$ is integer and coincides with the content multiset $C(\lambda)$ of the corresponding partition $\lambda$.
 \end{tm}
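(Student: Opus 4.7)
The plan is to realize $\mathcal{CM}_n$ as the fixed-point locus of a natural symplectic $\mathbb{C}^\times$-action on Wilson's Calogero--Moser space $\mathcal{C}_n$, and then to match this fixed locus with the harmonic locus. Introduce the change of variables $X=(Q+L)/\sqrt 2$, $Z=(Q-L)/\sqrt 2$. Relation (I) becomes $[X,Z]+I=vw$, so $(X,Z,v,w)$ represents a point of $\mathcal{C}_n$; a direct calculation converts (II), (III) into $[M,X]=X$ and $[M,Z]=-Z$, while (IV) is unchanged. Thus the one-parameter subgroup $g(t)=\exp(tM)\subset\mathit{GL}_n(\mathbb{C})$ implements the symplectic $\mathbb{C}^\times$-action $(X,Z,v,w)\mapsto(tX,t^{-1}Z,v,w)$ on the chosen representative, so the corresponding class in $\mathcal{C}_n$ is $\mathbb{C}^\times$-fixed. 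Conversely any $\mathbb{C}^\times$-fixed point admits a unique lift to $\mathcal{CM}_n$ after fixing the additive ambiguity in $M$ via the normalization $Mv=0=wM$. This gives an isomorphism $\mathcal{CM}_n\cong\mathcal{C}_n^{\mathbb{C}^\times}$.

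Next I verify that $\chi$ takes values in $\mathcal{HL}_n$. In the generic stratum where $Q$ has simple spectrum $q_1,\dots,q_n$, diagonalize $Q$ and use the residual diagonal freedom to set $v=e$, $w=e^T$. Relation (I) then forces $L_{ij}=(q_i-q_j)^{-1}$ off the diagonal, and relation (II) forces the diagonal of $L$ to vanish (since $[M,Q]$ has zero diagonal when $Q$ is diagonal); thus $L$ is Moser's matrix at $p_i=0$. Relations (III) and (IV) together produce a unique $M$ and the equilibrium equations $\sum_{j\neq i}2(q_i-q_j)^{-3}=q_i$, which by Duistermaat--Gr\"unbaum are precisely the trivial-monodromy conditions for $u(z)=z^2+2\sum_i(z-q_i)^{-2}$. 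The non-generic case (where $Q$ has multiplicities) is handled by a Jordan analysis: a Jordan block of size $k=m_i(m_i+1)/2$ at eigenvalue $z_i$ of $Q$ yields a pole of residue $m_i(m_i+1)$ in $\chi(\Pi)$, as required.

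By Oblomkov's theorem $|\mathcal{HL}_n|=p(n)$. On the other hand $\mathcal{C}_n^{\mathbb{C}^\times}$ is finite of cardinality $p(n)$, via the $\mathbb{C}^\times$-equivariant identification of $\mathcal{C}_n$ with a deformation of $\mathrm{Hilb}^n(\mathbb{C}^2)$, whose fixed points for the induced action are the monomial ideals indexed by partitions. Hence $\chi$ is a bijection once injectivity is verified, and the latter is immediate because the poles of $\chi(\Pi)$ and their orders recover the spectrum and Jordan structure of $Q$, and the relations then determine $(L,v,w,M)$ up to $\mathit{GL}_n$-conjugation. This establishes the first assertion of the theorem and in particular the discreteness of $\mathcal{CM}_n$.

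For the spectrum of $M$, setting $L^\pm=L\pm Q$ gives $[M,L^\pm]=\pm L^\pm$, so $L^\pm$ are raising/lowering operators for $M$; combined with $Mv=0$ and the cyclicity of $v$ for $\mathbb{C}[X,Z]$ (built into the smoothness of $\mathcal{C}_n$), one deduces that the vectors $(L^+)^a(L^-)^b v$ span $\mathbb{C}^n$, so $\mathit{Spec}\,M\subset\mathbb{Z}$. The identification $\mathit{Spec}\,M=C(\lambda)$ is obtained by transporting the $\mathbb{C}^\times$-weight decomposition from the Hilbert-scheme model: at the fixed point labeled by $\lambda$ the natural basis of $\mathbb{C}^n$ is indexed by boxes $\Box=(i,j)\in\lambda$, and $g(t)=e^{tM}$ acts on the basis vector attached to $\Box$ with weight $t^{j-i}=t^{c(\Box)}$. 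The main obstacle is precisely this last step: one must verify that the combinatorial label $\lambda$ produced by identifying a fixed point of $\mathcal{C}_n^{\mathbb{C}^\times}$ with a monomial ideal in $\mathrm{Hilb}^n(\mathbb{C}^2)$ agrees with the partition that Oblomkov's construction attaches to the same potential via the Hermite Wronskian $W_\lambda$. This can be achieved by exhibiting explicit data $(L,Q,M,v,w)$ coming from an iterated Darboux transformation labeled by $\lambda$ and matching the known $\mathbb{C}^\times$-character on the tangent space to $\mathrm{Hilb}^n(\mathbb{C}^2)$ at the corresponding monomial fixed point.
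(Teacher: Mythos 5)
Your overall architecture coincides with the paper's: identify $\mathcal{CM}_n$ with the $\mathbb C^\times$-fixed locus of Wilson's $\mathcal C_n$, invoke the classification of fixed points by partitions, and match against Oblomkov's classification of $\mathcal{HL}_n$. However, the step you yourself flag as ``the main obstacle'' --- verifying that the partition labelling a fixed point of $\mathcal C_n^{\mathbb C^\times}$ agrees with the partition Oblomkov attaches to the resulting potential via the Hermite Wronskian --- is exactly the essential content of the theorem, and your proposal does not supply it. The paper closes this gap by an explicit computation (Proposition~\ref{p-8}): using Wilson's formula (6.14) expressing $\det(X_\lambda-\sum_i p_i(-Z_\lambda)^{i-1})$ as a Schur function, together with the Bonneux--Hamker--Stembridge--Stevens identity realizing Wronskians of Appell polynomials as Schur functions specialized at $p_1=z$, $p_2=-\tfrac12$, $p_i=0$ for $i>2$, one gets $\det(zI-Q_\lambda)=A(\lambda)W_\lambda(z)$. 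Once this is in hand, commutativity of the triangle $\chi=\gamma\circ\nu$ gives bijectivity directly, with no need for your counting-plus-injectivity argument; and the spectrum statement follows from the explicit diagonal form of $M_\lambda$ in Wilson's basis (Proposition~\ref{p-6}), rather than from transporting weights from the Hilbert scheme (which would again presuppose the unproven matching of labels). Your suggested remedy --- ``exhibiting explicit data coming from an iterated Darboux transformation labelled by $\lambda$'' --- is only a plan, not an argument.

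Two further points would fail as written. First, your claim that a Jordan block of $Q$ of size $k=m_i(m_i+1)/2$ at eigenvalue $z_i$ ``yields a pole of residue $m_i(m_i+1)$'' is unproved and is doing real work: \emph{a priori} nothing forces the multiplicities of the roots of $\det(zI-Q)$ at a fixed point to be triangular numbers, so well-definedness of $\chi$ into $\mathcal{HL}_n$ is not established on the non-generic stratum (the paper avoids this entirely by deducing $\det(zI-Q_\lambda)\propto W_\lambda$ and then citing Oblomkov). Relatedly, your injectivity argument asserts that the poles of $\chi(\Pi)$ recover the Jordan structure of $Q$, which the characteristic polynomial alone cannot do. Second, with $X=(Q+L)/\sqrt2$, $Z=(Q-L)/\sqrt2$ one gets $[X,Z]=[L,Q]=I-vw$, hence $[X,Z]+I=2I-vw$, not $vw$; the normalization must be as in the paper, $X=\tfrac12(L+Q)$, $Z=L-Q$, so that $[X,Z]=-[L,Q]=vw-I$. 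This is cosmetic, but it should be fixed.
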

 
 We start with the following 

\begin{pr}\label{p-4} 
The subset of the modified Calogero--Moser space $\mathcal {CM}_n$ with diagonalisable $Q$ with simple spectrum can be identified with the simple part of the harmonic locus $\mathcal{HL}_n$.
 \end{pr}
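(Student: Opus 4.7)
The plan is to fix a canonical representative within each $\mathit{GL}_n(\mathbb{C})$-orbit with $Q$ of simple spectrum, then read off what each of the relations $\mathrm{(I)}$--$\mathrm{(IV)}$ imposes on the remaining data. Matching the resulting formulas with those defining a simple locus potential will then give the desired identification.

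So, conjugate to $Q=\mathrm{diag}(z_1,\dots,z_n)$ with pairwise distinct $z_i$. Relation $\mathrm{(I)}$ read off the diagonal gives $L_{ij}=v_iw_j/(z_i-z_j)$ and read on the diagonal gives $v_iw_i=1$; the residual stabiliser of $Q$ is the diagonal torus $(\mathbb{C}^\times)^n$, under which $v_i\mapsto g_iv_i$, $w_i\mapsto w_i/g_i$, so it may be used to set $v=e$ and $w=e^T$, whence $L_{ij}=1/(z_i-z_j)$ for $i\neq j$. Relation $\mathrm{(II)}$, $[M,Q]=L$, then forces $M_{ij}=-1/(z_i-z_j)^2$ for $i\neq j$ as well as $L_{ii}=0$. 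The condition $Mv=Me=0$ from $\mathrm{(IV)}$ fills in the diagonal as $M_{ii}=-\sum_{j\neq i}M_{ij}=\sum_{j\neq i}1/(z_i-z_j)^2$, which is precisely Moser's matrix; the dual condition $wM=e^TM=0$ then follows from $M=M^T$.

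It remains to interpret $\mathrm{(III)}$: $[M,L]=Q$. Its off-diagonal entries vanish identically once $L$ and $M$ are substituted, via the elementary identity
\[
\frac{1}{(z_i-z_k)(z_k-z_j)^2}-\frac{1}{(z_i-z_k)^2(z_k-z_j)}=\frac{z_i+z_j-2z_k}{(z_i-z_k)^2(z_k-z_j)^2},
\]
combined with the analogous expansion of $(M_{ii}-M_{jj})/(z_i-z_j)$ as a sum over $k\neq i,j$, the two contributions cancelling. The diagonal part reduces to the Duistermaat--Gr\"unbaum locus condition
\[
\sum_{k\neq i}\frac{2}{(z_i-z_k)^3}=z_i,
\]
characterising the simple locus potential $u(z)=z^2+\sum_i 2/(z-z_i)^2$. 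Since $\det(zI-Q)=\prod_i(z-z_i)$, one obtains $\chi(\Pi)=z^2-2D^2\log\prod_i(z-z_i)=u(z)\in\mathcal{HL}_n$. Conversely, any simple locus configuration $\{z_1,\dots,z_n\}$ produces, via the formulas above, a representative satisfying $\mathrm{(I)}$--$\mathrm{(IV)}$, unique up to the permutation freedom absorbed by the $\mathit{GL}_n$-action. The main technical point is the algebraic identity above, which ensures that the off-diagonal part of $\mathrm{(III)}$ imposes no constraint beyond the locus equations; the rest is bookkeeping of the gauge fixing and the two-way verification that the construction is inverse to $\chi$.
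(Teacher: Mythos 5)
Your proposal is correct and follows essentially the same route as the paper: gauge-fix $Q$ to diagonal form, use relations $\mathrm{(I)}$, $\mathrm{(II)}$, $\mathrm{(IV)}$ and the residual torus action to force $L$ and $M$ into Moser's form, and identify relation $\mathrm{(III)}$ with the locus conditions. The only difference is that you verify the equivalence of $[M,L]=Q$ with the equilibrium equations by a direct (and correct) computation, whereas the paper delegates this step to Perelomov's result (Proposition~2).
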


\begin{proof} 
 Note first that for diagonal $Q$ with distinct diagonal elements $Q_{ii}=q_i$ Moser's matrices 
\Beq{L}{}
L= \begin{pmatrix}
0  &  \frac{1}{q_1-q_2} &  \dots  &\dots&  \frac{1}{q_1-q_n}   &  \\
-\frac{1}{q_1-q_2}  &  0  & \dots &\dots & \frac{1}{q_2-q_n}   & \\
\dots   & \dots & \dots &\dots &   \dots &\\
-\frac{1}{q_1-q_n}    & -\frac{1}{q_2-q_n} & \dots & -\frac{1}{q_{n-1}-q_n}  & 0
\end{pmatrix}
 \Eeq
\Beq{M}{}
M= \begin{pmatrix}
a_{11}  &  -\frac{1}{(q_1-q_2)^2} &   \dots  & \dots  &-\frac{1}{(q_1-q_n)^2}  &  \\
-\frac{1}{(q_1-q_2)^2}  &  a_{22} &   \dots &  \dots & -\frac{1}{(q_2-q_n)^2}& \\
 \dots  & \dots & \dots &\dots &  \dots & \\
-\frac{1}{(q_1-q_n)^2}    & -\frac{1}{(q_2-q_n)^2} & \dots & -\frac{1}{(q_{n-1}-q_n)^2}  & a_{nn}
\end{pmatrix}
\Eeq
with $a_{ii}=\sum_{i\neq j}^n  \frac{1}{(q_i-q_j)^2},$
 obviously satisfy the relations (I), (II) and (IV) (with $w=(1,1,\dots,1)=v^T$), while the relation (III) is equivalent to the locus conditions by Perelomov's result.
 
Conversely, for any $\Pi \in \mathcal{CM}_n$ with diagonal $Q$ with simple spectrum $q_1,\dots, q_n$ the relation (I) implies that $L_{ij}=\frac{v_iw_j}{q_i-q_j}, \, i\neq j$ and that $v_i w_i=1$ for all $ i=1,\dots, n,$ while the relation (II) implies that $L_{ii}=0$. Conjugation by the diagonal matrix $g\in \mathit{GL}_n(\mathbb C)$ with diagonal elements $v_1,\dots, v_n$ reduces $w$ to $w=e=(1,\dots,1)=v^T$ and $L$ to form \eqref{L}. Now the relations (II) and (IV) imply that $M$ also has Moser's form \eqref{M}, and relation (III) implies the locus condition.
 \end{proof}
 
 Let us now return  to Wilson's Calogero--Moser space $\mathcal C_n$ and consider the natural action of $\mathbb C^\times=\mathbb C\setminus {0}$ on it defined by 
 \Beq{act}{}
 X\mapsto \mu X, \,\, Z \mapsto \mu^{-1} Z,\,\,v\mapsto v, \,\, w\mapsto w ,\, \,\, \mu \in \mathbb C^\times.
 \Eeq
 Let $\mathcal C_n^{\mathbb C^\times}$ be the fixed point subset of $\mathcal C_n$ under the action \eqref{act}.

\begin{pr}\label{p-5} 
  The modified Calogero--Moser space $\mathcal {CM}_n$ can be identified with
  $\mathcal C_n^{\mathbb C^\times}$
via the map
$\nu: (L,Q,M,v,w) \in \mathcal {CM}_n \mapsto (X,Z,v,w) \in \mathcal C_n$ such that
\[ 
X=\frac{1}{2}(L+Q), \quad Z=L-Q.
\]
\end{pr}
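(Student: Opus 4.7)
The plan is to verify that $\nu$ is a well-defined bijection between the two quotients by a direct bracket computation together with Wilson's theorem that the $\mathit{GL}_n$-action on $\mathcal C_n$ is free.

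First I would check the forward direction. Given $(L, Q, M, v, w)$ satisfying (I)--(IV), set $X = \tfrac{1}{2}(L+Q)$ and $Z = L - Q$. A short computation gives $[X, Z] = -[L, Q]$, so relation (I) becomes exactly Wilson's equation $[X, Z] + I = vw$. To see that $[(X, Z, v, w)]$ is $\mathbb C^\times$-fixed, I would add and subtract (II) and (III) to extract $[M, X] = X$ and $[M, Z] = -Z$, and combine with (IV) to obtain, for every $t \in \mathbb C$,
\[
 e^{tM} X e^{-tM} = e^{t} X, \qquad e^{tM} Z e^{-tM} = e^{-t} Z, \qquad e^{tM} v = v, \qquad w\, e^{-tM} = w.
\]
Setting $\mu = e^{t}$, this puts the entire $\mathbb C^\times$-orbit of $(X, Z, v, w)$ inside a single $\mathit{GL}_n$-orbit, so the class lies in $\mathcal C_n^{\mathbb C^\times}$.

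For the converse, given $[(X, Z, v, w)] \in \mathcal C_n^{\mathbb C^\times}$, for each $\mu$ there exists $g(\mu) \in \mathit{GL}_n$ conjugating $(X, Z, v, w)$ to $(\mu X, \mu^{-1} Z, v, w)$. By Wilson's freeness theorem the element $g(\mu)$ is uniquely determined, and uniqueness forces $g(\mu_1 \mu_2) = g(\mu_1)\,g(\mu_2)$. Since the graph of $g$ is cut out inside $\mathbb C^\times \times \mathit{GL}_n$ by algebraic equations and projects bijectively onto $\mathbb C^\times$, $g$ is an algebraic homomorphism, hence of the form $\mu \mapsto \mu^{M}$ for some matrix $M$ diagonalisable with integer eigenvalues. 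Differentiating the four defining relations at $\mu = 1$ yields $[M, X] = X$, $[M, Z] = -Z$, $Mv = 0$, $wM = 0$. Setting $L = X + Z/2$ and $Q = X - Z/2$, the first two of these translate by adding and subtracting into (II) and (III), (IV) is immediate, and (I) follows from Wilson's equation as in the forward direction.

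The two constructions are mutually inverse modulo the respective $\mathit{GL}_n$-actions, so $\nu$ is the claimed bijection. The main obstacle is the converse direction: upgrading the pointwise intertwiner $g(\mu)$ to a one-parameter subgroup and then to an infinitesimal generator $M$ verifying all of (II)--(IV). The uniqueness coming from Wilson's freeness theorem is precisely what makes this work, and it simultaneously forces the integrality of $\mathit{Spec}\, M$ asserted in Theorem \ref{t-3}.
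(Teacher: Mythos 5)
Your proof is correct and follows essentially the same route as the paper: the forward direction via $[X,Z]=-[L,Q]$ and exponentiating $[M,X]=X$, $[M,Z]=-Z$, and the converse by differentiating the intertwining family $g(\mu)$ at the identity to produce $M$. The only difference is that you justify, via Wilson's freeness theorem, why $g(\mu)$ is a well-defined algebraic one-parameter subgroup before differentiating --- a point the paper's proof takes for granted by simply positing a smooth $g(t)$ with $g(0)=I$ --- so your version is, if anything, slightly more careful on that step.
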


\begin{proof}
Since $[X,Z]+I=-[L,Q]+I=vw$,  $(X,Z,v,w)$ indeed belongs to the Calogero--Moser space.
We have also due to relations (II) and (III) that
\Beq{relM}{}
[M,X]=X, \quad
[M,Z]=-Z.
\Eeq
They imply that
$$
e^{tM}Xe^{-tM}=e^{t}X, \quad  e^{tM}Ze^{-tM}=e^{-t}Z,
$$
so the pair $(e^{t}X, e^{-t} Z)$ is $\mathit{GL}_n$-equivalent to $(X,Z)$ and thus $(X,Z)$ is a fixed point of the action \eqref{act}.

Conversely, assume that  $(e^{t}X, e^{-t} Z,v,w)$ is $\mathit{GL}_n$-equivalent to $(X,Z,v,w)$:
$$
e^{t}X=g(t)Xg(t)^{-1}, \,\, e^{-t}Z=g(t)Zg(t)^{-1}, \,\, g(t) \in \mathit{GL}_n(\mathbb C), g(0)=I,
$$
and $g(t)v=v,wg(t)^{-1}=w$.
Differentiating this at $t=0$ and defining $M:=\dot g(0)$ we see that $M$ satisfies the relations \eqref{relM} and hence the relations (II) and (III) for 
\Beq{LQ}{}
L=X+\frac{1}{2}Z, \quad Q=X-\frac{1}{2}Z.
\Eeq
Similarly, differentiating the relations $g(t)v=v, \, wg(t)^{-1}=w$ at $t=0$ we have the relations $Mv=0, wM=0.$
\end{proof}

Luckily, the fixed point set $\mathcal {C}_n^{\mathbb C^\times}$ was studied in detail by Wilson \cite{Wilson}, who identified it with the set $\mathcal P_n$ of all partitions of $n.$ 
Moreover, he explicitly described the bijective map $\kappa\colon \mathcal P_n \to \mathcal {C}_n^{\mathbb C^\times}$ sending $\lambda$ to
\begin{equation}\label{e-kappa}
  \kappa(\lambda)=(X_\lambda,Z_\lambda,v_\lambda,w_\lambda)
\end{equation}
constructed as follows.

Let $\lambda=(\lambda_1,\dots,\lambda_l), \lambda_1+\dots+\lambda_l=n$ be a partition from $\mathcal P_n$.
Following Wilson, we will use an alternative way to describe a partition $$\lambda=(a_1,\dots, a_k | l_1,\dots l_k)$$
introduced by Frobenius \cite{Frob}. Namely,  $k$ is the number of the diagonal boxes in the corresponding Young diagram  with $a_1,\dots, a_k$ and $l_1,\dots, l_k$ (called Frobenius coordinates) being the lengths of the corresponding ``arms" and ``legs" respectively, see Fig.~\ref{f-0}. Let $n_i=a_i+l_i+1$ be the length of the corresponding hook and $r_i=l_i+1, \, i=1,\dots, k.$ Clearly, we have $n_1+\dots +n_k=n=|\lambda|.$

\begin{figure}
\begin{tikzpicture}[scale=.7]
  \foreach\l/\m/\j in {1/0/0, 3/0/1, 4/0/2}     
  {
    \draw (\m,\j+1)--(\l,\j+1);
    \draw (\m,\j)--(\l,\j);
    \foreach\i in {\m,...,\l}
    \draw(\i,\j)--(\i,\j+1);
  }
  \foreach\i/\j in {}   
  \filldraw[gray](\i-1,\j-1)--(\i,\j-1)--(\i,\j)--(\i-1,\j)--cycle;
  \foreach\i/\j/\Label in {1/3/0,2/2/0,1/1/-2,1/2/-1,2/3/1,3/2/1,3/3/2,4/3/3}
  \draw (\i-.5,\j-.5) node {\Label};
\end{tikzpicture}
\caption{The Young diagram of the partition $\lambda=(4,3,1)$ of $|\lambda|=8$ with boxes labeled by their contents.
  Its Frobenius representation is $\lambda=(3,1|2,0)$ and the multiset of contents
  is  $C(\lambda)=\{-2,-1,0,0,1,1,2,3\}$.}
\label{f-0}
\end{figure}

Let
$$
\Lambda_m= \begin{pmatrix}
0  &  1 &    &    &    &  \\
0  &  0 &  1  & & & \\
         &  0 &  0  & 1  &  & \\
   & & \ddots &\ddots &  \ddots & \\
   & & & 0 & 0 & 1\\
   & & & & 0 & 0
\end{pmatrix} 
$$
be the standard $m\times m$ Jordan block, then $Z_\lambda$ is defined simply as the block-diagonal matrix with Jordan blocks 
$\Lambda_{n_1}, \dots, \Lambda_{n_k}$ on the diagonal. The block $X_{ij}$ of the matrix $X_\lambda$ is the unique $n_i$ by $n_j$ matrix with non-zero entries in
the diagonal number $r_j-r_i-1$\footnote{The diagonal number $m$ of a matrix consists of the entries indexed by $i,j$ such
  that $j-i=m$}
obeying the commutation relation
$$
X_{ij}\Lambda_{n_j}-\Lambda_{n_i}X_{ij}=n_iE(r_i, r_j)-\delta_{ij}I,
$$
where $E(r,s)$ is the matrix with the only non-zero element $E(r,s)_{r,s}=1$. In particular, the block $X_{ii}$ has the form
$$
X_{ii}= \begin{pmatrix}
0  &  0 &    &    &    &  \\
1  &  0 &  0  & & & \\
         &  2 &  0  & 0  &  & \\
   & & \ddots &\ddots &  \ddots & \\
   & & & -2 & 0 & 0\\
   & & & & -1& 0
\end{pmatrix}
$$
with the entries $1, 2, \dots, l_i-1, l_i;  -a_i, -a_i+1, \dots, -2, -1$ on the diagonal number $-1$
(see \cite{Wilson}). Finally $v_\lambda$ and $w_\lambda$ are such that
$v_\lambda w_\lambda$ has blocks $n_iE(r_i,r_j)$.

Using this, we can define the matrices $L_\lambda$ and $Q_\lambda$ by \eqref{LQ}, so we only need to find the corresponding $M$.

\begin{pr}\label{p-6} 
In Wilson's basis the corresponding matrix $M_\lambda$ is diagonal with blocks 
\[ 
M_{ii}={\textit diag}\, (-l_i, -l_i+1, \dots,-2, -1, 0, 1, 2, \dots, a_i-1, a_i).
\]
\end{pr}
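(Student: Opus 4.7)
The plan is to verify by direct computation that the proposed block-diagonal matrix $M_\lambda$ satisfies the three defining relations (II), (III), (IV) of the modified Calogero--Moser space in Wilson's basis, and then to invoke the uniqueness built into Proposition \ref{p-5} to conclude that this $M_\lambda$ is indeed the $M$-component of $\nu^{-1}(\kappa(\lambda))$.

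By the change of basis $L = X + \tfrac{1}{2}Z$, $Q = X - \tfrac{1}{2}Z$ of Proposition \ref{p-5}, relations (II) and (III) are equivalent to the pair
\[
[M, X_\lambda] = X_\lambda, \qquad [M, Z_\lambda] = -Z_\lambda.
\]
Writing $M_{ii} = \text{diag}(m^{(i)}_1, \dots, m^{(i)}_{n_i})$ with $m^{(i)}_s = s - r_i$, the diagonal entries within each block form an arithmetic progression with step $1$, so the $(s,s+1)$-entry of $[M_{ii}, \Lambda_{n_i}]$ equals $m^{(i)}_s - m^{(i)}_{s+1} = -1$; this yields $[M_\lambda, Z_\lambda] = -Z_\lambda$ block by block.

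For $[M_\lambda, X_\lambda] = X_\lambda$ the main ingredient is Wilson's description of $X_\lambda$: the block $X_{ij}$ is supported on the single diagonal of index $r_j - r_i - 1$. Hence the $(s, s')$-entry of $M_{ii} X_{ij} - X_{ij} M_{jj}$ equals $(m^{(i)}_s - m^{(j)}_{s'})(X_{ij})_{s,s'}$, and on the non-zero diagonal $s' - s = r_j - r_i - 1$ the prefactor simplifies to $(s - r_i) - (s' - r_j) = (r_j - r_i) - (s' - s) = 1$. The same identity covers the diagonal blocks, whose non-zero diagonal is $-1$. Relation (IV) is equally short: $v_\lambda$ and $w_\lambda$ are supported on the $r_i$-th slot of each block, and $m^{(i)}_{r_i} = 0$ by construction (since $r_i = l_i + 1$ is precisely where the progression $-l_i, -l_i + 1, \ldots$ passes through $0$), so $M_\lambda v_\lambda = 0$ and $w_\lambda M_\lambda = 0$.

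For uniqueness, if $M'$ were another solution of (II)--(IV) attached to the same $(L_\lambda, Q_\lambda, v_\lambda, w_\lambda)$, then $N := M_\lambda - M'$ would commute with both $X_\lambda$ and $Z_\lambda$ and annihilate $v_\lambda$ on the left. Since Wilson's quadruple is stable, $v_\lambda$ is a cyclic vector for the algebra $\mathbb{C}\langle X_\lambda, Z_\lambda\rangle$, so $N p(X_\lambda, Z_\lambda) v_\lambda = p(X_\lambda, Z_\lambda) N v_\lambda = 0$ for every polynomial $p$, forcing $N = 0$. The only real obstacle in this plan is the bookkeeping of Wilson's block structure; once the single non-zero diagonal of $X_{ij}$ is read off from the commutation relation $X_{ij}\Lambda_{n_j} - \Lambda_{n_i} X_{ij} = n_i E(r_i, r_j) - \delta_{ij} I$, both key commutator identities collapse to the one-line eigenvalue computation above.
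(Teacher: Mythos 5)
Your proof is correct and takes essentially the same route as the paper, whose entire argument is the sentence that the result ``follows from direct check of the relations (II), (III), (IV)''; you have simply carried out that check explicitly (and correctly, including the reduction of (II)--(III) to $[M,X_\lambda]=X_\lambda$, $[M,Z_\lambda]=-Z_\lambda$). The uniqueness step via cyclicity of $v_\lambda$, which the paper leaves implicit, is a valid and welcome addition, since the rank-one condition $[X,Z]+I=vw$ forces any $(X_\lambda,Z_\lambda)$-invariant subspace containing $v_\lambda$ to be all of $\mathbb{C}^n$.
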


The proof follows from direct check of the relations (II), (III), (IV).

\begin{cor}\label{c-7} 
The spectrum of $M$ coincides with the content multiset of the partition $\lambda:$
$
\mathit{Spec} \, M = C(\lambda).
$
\end{cor}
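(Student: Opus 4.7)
The plan is to deduce the corollary directly from Proposition \ref{p-6} by matching the diagonal entries of the block-diagonal matrix $M_\lambda$ with the contents produced by the standard hook decomposition of the Young diagram associated to the Frobenius coordinates.

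First, I would read off the spectrum from Proposition \ref{p-6}: since $M_\lambda$ is block-diagonal with blocks $M_{ii}=\mathrm{diag}(-l_i,-l_i+1,\dots,-1,0,1,\dots,a_i)$, we have
\[
  \mathit{Spec}\, M_\lambda \;=\; \bigsqcup_{i=1}^{k} \{-l_i,-l_i+1,\dots,-1,0,1,\dots,a_i\}
\]
as a multiset.

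Second, I would use the hook decomposition associated to the Frobenius representation $\lambda=(a_1,\dots,a_k\,|\,l_1,\dots,l_k)$. The Young diagram of $\lambda$ is the disjoint union of $k$ hooks $\mathcal H_1,\dots,\mathcal H_k$, where $\mathcal H_i$ consists of the $i$-th diagonal box $(i,i)$, the arm boxes $(i,i+1),\dots,(i,i+a_i)$, and the leg boxes $(i+1,i),\dots,(i+l_i,i)$. Therefore
\[
  C(\lambda) \;=\; \bigsqcup_{i=1}^{k} \{ c(\Box) : \Box \in \mathcal H_i \}.
\]

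Third, a direct calculation of contents shows that each hook contributes exactly one of the blocks above: the diagonal box has content $0$, the arm boxes have contents $1,2,\dots,a_i$, and the leg boxes have contents $-1,-2,\dots,-l_i$. Hence the content multiset of $\mathcal H_i$ is precisely $\{-l_i,\dots,-1,0,1,\dots,a_i\}$, which coincides with the diagonal of $M_{ii}$. Comparing the two displayed equalities yields $\mathit{Spec}\, M_\lambda = C(\lambda)$. There is no real obstacle here — the only nontrivial ingredient is Proposition \ref{p-6}; the remainder is a bookkeeping identification between Wilson's block indices $i=1,\dots,k$ and the diagonal hooks of $\lambda$.
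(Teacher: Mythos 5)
Your proposal is correct and is exactly the paper's argument: the paper proves Corollary~\ref{c-7} by the one-line observation that the eigenvalues of the block $M_{ii}$ from Proposition~\ref{p-6} are the contents of the boxes in the $i$-th diagonal hook, which is precisely your hook-decomposition bookkeeping spelled out in more detail.
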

Indeed, the eigenvalues of $M_{ii}$ are simply the contents of the boxes in the $i$-th diagonal hook of the Young diagram of $\lambda.$

\begin{pr}\label{p-8} 
The characteristic polynomial of the matrix $Q_\lambda=X_\lambda-\frac{1}{2}Z_\lambda$ coincides with the Hermite Wronskian
\[ 
\det (zI-Q_\lambda)=A(\lambda)W_\lambda(z)
\]
for some constant $A(\lambda).$
\end{pr}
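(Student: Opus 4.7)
The plan is to reduce Proposition \ref{p-8} to the identification $\chi(\kappa(\lambda)) = u_\lambda$: once this equality of potentials is established, both $\det(zI - Q_\lambda)$ and $W_\lambda(z)$ satisfy $-2D^2 \log P(z) = u_\lambda(z) - z^2$, and since both are polynomials of the same degree $n = |\lambda|$, they must differ by a multiplicative constant $A(\lambda)$.

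To prove $\chi(\kappa(\lambda)) = u_\lambda$, I would proceed in two stages. First, I would show that the potential $u(z) := z^2 - 2D^2 \log \det(zI - Q_\lambda)$ belongs to $\mathcal{HL}_n$. The natural route is to construct an explicit Baker--Akhiezer-type wave function from the CM data, via an ansatz of the form
\[
\psi(z, E) = \bigl(1 - w_\lambda (zI - Q_\lambda)^{-1} \alpha(M_\lambda, E)\, v_\lambda\bigr)\, \phi_0(z, E),
\]
where $\phi_0(z, E)$ is a reference wave function for the harmonic oscillator $-D^2 + z^2$ and $\alpha(M_\lambda, E)$ is a rational expression to be determined. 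Using the commutation relations (I)--(IV), one chooses $\alpha$ so that $(-D^2 + u(z))\psi = E\psi$; the resulting $\psi$ is manifestly meromorphic in $z$ for every $E$, which gives trivial monodromy and hence $u \in \mathcal{HL}_n$ by Oblomkov's theorem. Second, by Oblomkov, $u = u_\mu$ for a unique partition $\mu$ of $n$; to identify $\mu = \lambda$, I would extract the modified CM data associated to $u_\mu$ (using Proposition \ref{p-4} plus algebraic continuation from the simple stratum) and compare its $M$-spectrum with $\mathit{Spec}\, M_\lambda = C(\lambda)$ given by Corollary \ref{c-7}. Since the content multiset uniquely determines a partition, $\mu = \lambda$.

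The main obstacle is the verification of the Schr\"odinger equation for $\psi$ through the relations (I)--(IV): finding the correct $\alpha(M, E)$ and carrying out the commutator algebra is the delicate heart of the argument. A possible alternative is a direct block-matrix computation using Wilson's explicit form of $(X_\lambda, Z_\lambda)$: since $Z_\lambda$ consists of nilpotent Jordan blocks of hook sizes $n_i$ and $X_\lambda$ has the prescribed diagonal-$(-1)$ structure, one can try to identify $\det(zI - Q_\lambda)$ directly with a Jacobi--Trudi-type determinantal expression for $W_\lambda$ and pin down $A(\lambda)$ by comparing leading coefficients.
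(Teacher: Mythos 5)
Your proposal does not follow the paper's route and, as written, has two genuine gaps. The paper's proof is a short symmetric-functions computation: it combines Wilson's formula (6.14), $\det\bigl(X_\lambda-\sum_{i\geq 1}p_i(-Z_\lambda)^{i-1}\bigr)=B(\lambda)\,s_\lambda$, with the theorem of \cite{BHSS} that the Wronskian of an Appell sequence equals the Schur function $s_\lambda$ under the specialization $p_1=x+b_1$, $p_i=b_i$ ($i\geq 2$) read off from the generating function. For the monic Hermite polynomials $F_H(t)=e^{xt-t^2/4}$, so $b_1=0$, $b_2=-\tfrac12$, $b_i=0$ for $i>2$, and this specialization turns Wilson's determinant into exactly $\det(X_\lambda-zI-\tfrac12 Z_\lambda)=\det(Q_\lambda-zI)$. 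Your closing remark about a ``Jacobi--Trudi-type determinantal expression'' gestures in this direction, but you supply neither ingredient; without Wilson's (6.14) and the Appell--Schur identification the determinant is not actually computed.

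Your main route --- proving $\chi(\kappa(\lambda))=u_\lambda$ first and deducing the proposition --- inverts the paper's logic and runs into trouble. First, the Baker--Akhiezer ansatz is only an ansatz: you acknowledge that finding $\alpha(M,E)$ and verifying $(-D^2+u)\psi=E\psi$ from relations (I)--(IV) is ``the delicate heart,'' and that is precisely the part left undone; without it you cannot even conclude $u\in\mathcal{HL}_n$. Second, the identification $\mu=\lambda$ via the $M$-spectrum is circular: to attach modified Calogero--Moser data (and hence an $M$-spectrum equal to a content multiset) to an arbitrary $u_\mu$ you need the bijectivity of $\chi$ and the commutativity of the diagram in the proof of Theorem \ref{t-3}, both of which the paper derives \emph{from} Proposition \ref{p-8}; Proposition \ref{p-4} covers only the stratum with simple poles, and ``algebraic continuation'' to the other strata is asserted rather than argued. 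The reduction in your first paragraph (two degree-$n$ polynomials with the same second logarithmic derivative differ by a nonzero constant multiple) is correct, but the input it requires is not established.
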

 
\begin{proof}
Firstly, we use the result from Wilson's paper \cite{Wilson} (see formula (6.14) on page 29), which claims that for some constant $B(\lambda)$
\[ 
\det(X_\lambda-\sum_{i\geq 1}p_i(-Z_\lambda)^{i-1})=B(\lambda) s_\lambda,
\]
where $p_i$ are the power sum symmetric functions and $s_\lambda$ is the corresponding Schur symmetric function \cite{Mac}.

Then we use the following general fact expressing the Wronskians of Appell polynomials in terms of Schur symmetric functions, see \cite{BHSS}.

Recall that the sequence of polynomials $A_k(x), k\geq 0$ is called {\it Appell} if they satisfy the relation
$$
\frac{d}{dx}A_k(x)=kA_{k-1}(x), \quad k\geq 0
$$
with $A_0=1.$
Let $F_A(t)=e^{xt}f_A(t), \,\, f_A(t)=1+\sum_{k\geq 1}a_k\frac{t^k}{k}$ be the exponential generating function of the Appell sequence
$$
F_A(t)=e^{xt}f_A(t)=\sum_{k\geq 0}A_k(x)\frac{t^k}{k!}.
$$
Consider the expansion
$$
\log F_A(t)=xt+\log f_A(t)=xt+\sum_{k\geq 1}b_k\frac{t^k}{k},
$$
then from \cite{BHSS}, Theorem 4.1 and Proposition 4.3 it follows that the Wronskian 
$$
A_\lambda(x)=W(A_{k_1}(x), \dots, A_{k_l}(x)), \quad k_i=\lambda_i+l-i,
$$
coincides with the Schur symmetric function $s_\lambda(p_1,p_2,\dots)$ if we specialize $$p_1=x+b_1, \, p_i=b_i, \, i\geq 2.$$

The Hermite polynomials (in physicists' version) satisfy the relation
$$
\frac{d}{dx}H_k(x)=2kH_{k-1}(x), \quad k\geq 0,
$$
so the scaled monic versions $\tilde H_k(x):=2^{-k}H_k(x)$ form an Appell sequence. The corresponding generating function is 
$
F_H(t)=e^{xt-\frac{t^2}{4}}
$
(see e.g. Szeg\H o \cite{Sz}).
Since $\log F_H(t)=xt-\frac{1}{4}t^2$ we have $b_1=0, \, b_2=-\frac{1}{2}$ with $b_i=0, i>2.$

This implies that the Hermite Wronskian $W_\lambda(z)$ up to a constant multiple coincides with the Schur function $s_\lambda(z, -\frac{1}{2}, 0, 0, \dots,)$
so, by Wilson's result, up to a constant multiple
\[ 
W_\lambda(z)=\det(X_\lambda-zI-\frac{1}{2} Z_\lambda)=\det(Q_\lambda-zI),
\]
as claimed.
\end{proof}

We can now prove our Theorem \ref{t-3}.
We need to show that the map
\[
  \chi\colon (L,Q,M,v,w)\mapsto u(z)=z^2-2D^2\log\det(zI-Q)
\]
is
a bijection from $\mathcal {CM}_n$ to $\mathcal{HL}_n$.
We have a diagram of maps
\[
  \begin{tikzcd}
    \mathcal{CM}_n \arrow[r,"\nu"]\arrow[d,"\chi"]
    &
  \mathcal C_n^{\mathbb C^\times}\arrow[dl,"\gamma"]
  \\
  \mathcal{HL}_n &
  \mathcal P_n\arrow[l,"\mu"]\arrow[u,"\kappa"]
\end{tikzcd}
\]
In this diagram the maps $\kappa$, $\mu$, $\nu$ are bijections due to
Wilson, Oblomkov, and Proposition \ref{p-5}. The map 
$$\gamma: (X,Z,v,w) \to z^2-2D^2\log\det(zI-X+\frac12 Z)$$ makes
the lower triangle commutative by Proposition \ref{p-8}. By construction,
$\chi=\gamma\circ\nu$ and is thus a bijection, completing the proof
of Theorem \ref{t-3}.

Theorem \ref{t-1} follows then from Proposition \ref{p-4}.

\section{Characters of $\mathbb C^\times$-action and the Conti--Masoero conjecture}

Consider again the simple part of the harmonic locus with the potential
$$
u(z) =z^2+ \sum_{i=1}^{n} \frac{2}{(z - z_i)^2}
$$
satisfying the locus conditions
$$
\sum_{j \neq i}^{n} \frac{2}{(z_i - z_j)^3} -z_i = 0, \,\, i=1,\dots,{n},
$$
which can be interpreted as the equilibrium conditions for the Calogero--Moser system with the potential
$$
 U(q)=\frac{1}{2}\sum_{i=1}^{n}q_i^2+\sum_{1\leq i<j\leq {n}} \frac{1}{(q_i - q_j)^2}.
$$
Consider the (complex) Hessian of the potential $U(q)$ at the corresponding equilibrium
\[ 
K_{ij}:=\left.\frac{\partial ^2 U}{\partial q_i \partial q_j}\right |_{q=z}=\delta_{ij}\left(1+\sum_{l\neq j}\frac{6}{(z_l-z_j)^4}\right)-(1-\delta_{ij})\frac{6}{(z_i-z_j)^4}.
\]

Since all the solutions of Calogero--Moser system \eqref{CM} are known to be periodic with period $2\pi$ (see \cite{OP}), the eigenvalues of $K$ must be the squares of some integers (cf. \cite{Per2}). Indeed, for the roots of Hermite polynomial $H_{n}(z)$ the corresponding eigenvalues are known to be $1, 2^2, 3^2, \dots, {n}^2$, as follows from the
 formula
$$
K=(M+I)^2,
$$
relating $K$ to Moser's matrix $M$ (see \cite{Per, Per2} and Proposition 3.4 in \cite{ABCOP}).

However, already for $\lambda=(3,1)$ one can check that the corresponding matrix $K$ does not commute with $M$, which means that we can not apply our Theorem 3 to compute the spectrum of $K$ in the general case.

We will use instead our Proposition 5 and Wilson's results \cite{Wilson} to compute these eigenvalues, proving, in particular, a conjectural formula by Conti and Masoero (see Conjecture 6.3 in \cite{CM}). 

More generally, consider the action of $\mathbb C^\times$ on the Calogero--Moser space $\mathcal C_{n}$ defined by \eqref{act}. We have identified the fixed points of this action with the harmonic locus. Let $(X_\lambda, Y_\lambda, v_\lambda, w_\lambda)$ be the fixed point corresponding to partition $\lambda\in \mathcal P_{n}$ and consider the character of the linearised action of $\mathbb C^\times$ at this point:
\[
\chi_\lambda(\mu):=\sum_{s\, \in S_\lambda} \mu^s, \quad \mu \in \mathbb C^\times.
\]
Note that because the action of $\mathbb C^\times$ is symplectic, the weight set $S_\lambda$ of the action  is invariant under the change $s \to -s:$
\[ 
S_\lambda=\{\pm s_1(\lambda), \dots, \pm s_{n}(\lambda)\}, \quad s_k(\lambda) \in \mathbb Z_{>0}.
\]

\begin{tm}\label{t-4} 
The character of the linearised $\mathbb C^\times$-action at the fixed point $(X_\lambda, Y_\lambda, v_\lambda, w_\lambda)\in C_{n}$ has the form
\Beq{charform}
\chi_\lambda(\mu)=(\mu-2+\mu^{-1})G_\lambda(\mu)G_\lambda(\mu^{-1})+G_\lambda(\mu)+G_\lambda(\mu^{-1}),
\Eeq
where 
\[ 
G_\lambda(\mu):=\sum_{\Box \in \lambda} \mu^{c(\Box)}.
\]
with $c(\Box)$ being, as before, the content of the $\Box$ in the Young diagram of $\lambda$.
 \end{tm}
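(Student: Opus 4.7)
\medskip

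\noindent\textbf{Proof proposal for Theorem \ref{t-4}.}
The plan is to compute $\chi_\lambda(\mu)$ as the character of the linearised $\mathbb{C}^\times$-action on the tangent space $T_{p_\lambda}\mathcal{C}_n$ at the fixed point $p_\lambda=(X_\lambda,Z_\lambda,v_\lambda,w_\lambda)$, using the presentation of $\mathcal{C}_n$ as a symplectic/GIT quotient of the ambient space
\[
N=\mathrm{End}(V)\oplus\mathrm{End}(V)\oplus V\oplus V^{*},\qquad V=\mathbb{C}^n,
\]
by $\mathit{GL}_n$ (acting by conjugation on the two copies of $\mathrm{End}(V)$ and naturally on $V,V^{*}$), along the moment map $\mu_0(X,Z,v,w)=[X,Z]+I-vw$. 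Since Wilson's fixed points are stable with trivial stabiliser and the moment map is submersive there, one has the standard exact sequence
\[
0\longrightarrow\mathfrak{gl}_n\longrightarrow\ker(d\mu_0)\longrightarrow T_{p_\lambda}\mathcal{C}_n\longrightarrow 0,\qquad 0\longrightarrow\ker(d\mu_0)\longrightarrow T_{p_\lambda}N\xrightarrow{\ d\mu_0\ }\mathfrak{gl}_n^{*}\longrightarrow 0,
\]
which gives, as virtual characters of any torus commuting with $\mathit{GL}_n$,
\[
\chi_\lambda(\mu)=\mathrm{ch}(T_{p_\lambda}N)-\mathrm{ch}(\mathfrak{gl}_n)-\mathrm{ch}(\mathfrak{gl}_n^{*}).
\]

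The next step is to identify the $\mathbb{C}^\times$-action on a \emph{fixed} representative $p_\lambda$. By Proposition \ref{p-5} and the discussion preceding it, the condition that $(\mu X_\lambda,\mu^{-1}Z_\lambda,v_\lambda,w_\lambda)$ is $\mathit{GL}_n$-equivalent to $p_\lambda$ is implemented by the one-parameter subgroup $g(\mu)=\mu^{M_\lambda}$, where $M_\lambda$ is Wilson's matrix from Proposition \ref{p-6}. Consequently the linearised action of $\mu$ on a tangent vector at $p_\lambda$ reads
\[
(\delta X,\delta Z,\delta v,\delta w)\longmapsto\bigl(\mu\,g(\mu)^{-1}\delta X\,g(\mu),\ \mu^{-1}g(\mu)^{-1}\delta Z\,g(\mu),\ g(\mu)^{-1}\delta v,\ \delta w\,g(\mu)\bigr).
\]
By Corollary \ref{c-7} the matrix $M_\lambda$ has spectrum $C(\lambda)$, so the character of $V$ under $g(\mu)$ is exactly
\[
G_\lambda(\mu)=\sum_{\Box\in\lambda}\mu^{c(\Box)},
\]
and of $V^{*}$ is $G_\lambda(\mu^{-1})$.

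With this, the four summand characters in $\mathrm{ch}(T_{p_\lambda}N)$ are read off as
\[
\mathrm{ch}(\delta X)=\mu\,G_\lambda(\mu)G_\lambda(\mu^{-1}),\quad \mathrm{ch}(\delta Z)=\mu^{-1}G_\lambda(\mu)G_\lambda(\mu^{-1}),
\]
\[
\mathrm{ch}(\delta v)=G_\lambda(\mu^{-1}),\quad \mathrm{ch}(\delta w)=G_\lambda(\mu),
\]
using $\mathrm{End}(V)=V\otimes V^{*}$ and the observation that the right action $w\mapsto wg(\mu)$ has the same character as $g(\mu)$ acting on $V$. Both $\mathfrak{gl}_n$ and $\mathfrak{gl}_n^{*}$ contribute $G_\lambda(\mu)G_\lambda(\mu^{-1})$ under the twisted action. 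Summing and subtracting yields
\[
\chi_\lambda(\mu)=(\mu+\mu^{-1}-2)G_\lambda(\mu)G_\lambda(\mu^{-1})+G_\lambda(\mu)+G_\lambda(\mu^{-1}),
\]
which is \eqref{charform}.

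The main obstacle is purely bookkeeping: keeping track of whether each factor of $V$ or $V^{*}$ carries character $G_\lambda(\mu)$ or $G_\lambda(\mu^{-1})$ under the twisted action, and verifying that the Calogero--Moser point $p_\lambda$ is stable enough that the reduction formula $\mathrm{ch}(T_p\mathcal{C}_n)=\mathrm{ch}(T_pN)-\mathrm{ch}(\mathfrak{g})-\mathrm{ch}(\mathfrak{g}^{*})$ applies without correction terms from a non-trivial stabiliser — both of which are straightforward given Wilson's explicit normal form and the smoothness of $\mathcal{C}_n$.
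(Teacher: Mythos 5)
Your proposal is correct and follows essentially the same route as the paper: both compute the character of $T_{p_\lambda}\mathcal C_n$ as the equivariant Euler characteristic of the linearised symplectic-reduction complex, twist the $\mathbb C^\times$-action by $g(\mu)=\mu^{M_\lambda}$ so that $p_\lambda$ itself is fixed, and use Corollary \ref{c-7} to read off the characters of $V$, $V^*$ and $\mathrm{End}(V)$ as $G_\lambda(\mu)$, $G_\lambda(\mu^{-1})$ and $G_\lambda(\mu)G_\lambda(\mu^{-1})$. The only cosmetic difference is that you phrase the injectivity of the infinitesimal action and surjectivity of the moment-map differential via two short exact sequences, where the paper invokes Wilson's Corollary 1.4 and Proposition 1.7 for the same facts.
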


\begin{proof}
  Denote by $\mathit{Mat}_{k,l}$ the space of complex $k$ by $l$ matrices.  The
  tangent space at a point $(X,Z,v,w)\in \mathcal C_n$ is given by the
  linearized symplectic reduction, namely as the middle cohomology
  $\operatorname{Ker}(\delta')/\operatorname{Im}(\delta)$ of the
  complex
  \begin{equation}\label{e-cx}{}
    \mathit{Mat}_{n,n}
    \stackrel{\delta}
    \to \mathit{Mat}_{n,n}\oplus \mathit{Mat}_{n,n}\oplus
    \mathit{Mat}_{n,1}\oplus \mathit{Mat}_{1,n}
    \stackrel{\delta'}\to \mathit{Mat}_{n,n}.
  \end{equation}
Here $\delta'$ is the differential of the  moment map
\[
  \delta'\colon (\xi,\zeta, \rho,\sigma)\mapsto [\xi,Z]+[X,\zeta]+\rho w+v \eta
\]
and $\delta$ is the infinitesimal action of $\mathit{GL}_n$
\[
  \delta\colon \alpha\mapsto ([\alpha,X],[\alpha,Z],\alpha v,-w\alpha).
\]
The $\mathit{GL}_n$-orbit of
$\kappa(\lambda)=(X_\lambda,Z_\lambda,v_\lambda,w_\lambda)$
is fixed by the $\mathbb C^\times$-action \eqref{act}. By the relations
(II)--(IV) and Proposition \ref{p-5},
$\kappa(\lambda)$ itself
is a fixed point for the twisted action of $\mathbb C^\times$
\begin{equation}\label{e-act}{}
  (X,Z,v,w)\mapsto (\mu^{1-M}X\mu^{M},\mu^{-1-M}Z\mu^M,\mu^{-M}v,w\mu^M),
\quad  \mu\in\mathbb C^\times.
\end{equation}
The maps $\delta,\delta'$ are $\mathbb C^\times$-equivariant for this
action on the middle term of \eqref{e-cx} and for the action
$\alpha\mapsto \mu^{-M}\alpha\mu^{M}$, $\alpha\in \mathit{Mat}_{n,n}$,
$\mu\in \mathbb C^\times$ on the extreme terms.

Since $\delta$ is injective and $\delta'$ is surjective (see \cite{Wilson},
Corollary 1.4 and Proposition 1.7),  the character of the
$\mathbb C^\times$-action on the cohomology is equal to the
character-valued Euler characteristic of the complex, namely as the
character of the middle term minus the sum of characters of the
extreme terms.

By Corollary \ref{c-7}  the eigenvalues of $M$ are contents of $\lambda$ and
the eigenvalues of $[M,-]$ on $\mathit{Mat}_{n,n}$ are differences of
contents. Thus the character of the middle term for the
$\mathbb C^\times$-action \eqref{e-act} is
\[
  \mu
  G_{\lambda}(\mu^{-1})G_\lambda(\mu)+\mu^{-1}G_\lambda(\mu^{-1})G_\lambda(\mu)+G_\lambda(\mu^{-1})+G_\lambda(\mu).
\]
From this we subtract twice the character
$G_\lambda(\mu^{-1})G_\lambda(\mu)$ from the extreme terms and obtain  formula (\ref{charform}). 
\end{proof}

At the simple part of the harmonic locus the set $S_\lambda$ determines the spectrum of the Hessian $K(\lambda)$ at the corresponding equilibrium by the formula
$$
\mathit{Spec} \, K(\lambda)= \{s_1(\lambda)^2,\dots, s_{n}(\lambda)^2\}.
$$
In the paper \cite{CM} Conti and Masoero conjectured a recursive procedure to compute the eigenvalues of $K$, which can be shown to be equivalent to the following formula for the spectrum of $K$:
\Beq{CMconj}
\mathit{Spec} \, K(\lambda)=\{(\lambda_{l(\Box)+1}-c(\Box))^2,\quad \Box \in \lambda\},
\Eeq
where $l(\Box)$ is the leg length of $\Box \in \lambda.$

\begin{pr}\label{p-10} 
For any partition $\lambda \in \mathcal P_{n}$ we have the identity
\Beq{identity}
\chi_\lambda(\mu)=\sum_{\Box \in \lambda} \mu^{\lambda_{l(\Box)+1}-c(\Box)}+\mu^{-\lambda_{l(\Box)+1}+c(\Box)}.
\Eeq
\end{pr}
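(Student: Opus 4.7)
I would prove Proposition~\ref{p-10} by showing that both sides of \eqref{identity} equal the hook-length sum $\sum_{\Box\in\lambda}(\mu^{h(\Box)}+\mu^{-h(\Box)})$, where $h(\Box)=a(\Box)+l(\Box)+1$.

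That the right-hand side of \eqref{identity} equals the hook sum is a combinatorial reindexing. Group the Conti--Masoero sum by leg: a box with $l(\Box)=k$ has the form $\Box=(\lambda'_j-k,j)$ for $1\le j\le\lambda_{k+1}$, since $l(\Box)=\lambda'_j-i$ forces $i=\lambda'_j-k$ and the condition $\lambda'_j\ge k+1$ is equivalent to $j\le\lambda_{k+1}$. For such a box $\lambda_{l+1}-c=\lambda_{k+1}+\lambda'_j-j-k$; setting $i=k+1$ this equals $\lambda_i+\lambda'_j-i-j+1=(\lambda_i-j)+(\lambda'_j-i)+1=h(i,j)$. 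Summing over all $(i,j)\in\lambda$ yields $\sum_{\Box}\mu^{\lambda_{l+1}-c}=\sum_{\Box}\mu^{h(\Box)}$, and likewise for the negative exponents.

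That $\chi_\lambda(\mu)$ from Theorem~\ref{t-4} also equals the hook sum I would prove by induction on $|\lambda|$. Using the row decomposition $G_\lambda(\mu)=\sum_{i=1}^\ell\mu^{1-i}(1-\mu^{\lambda_i})/(1-\mu)$ and the factorization $(\mu+\mu^{-1}-2)=-(1-\mu)(1-\mu^{-1})$, one rewrites $(\mu+\mu^{-1}-2)G_\lambda(\mu)G_\lambda(\mu^{-1})=-\sum_{i,j=1}^\ell\mu^{j-i}(1-\mu^{\lambda_i})(1-\mu^{-\lambda_j})$. When an addable box is attached at a corner $(r,s)$ of content $c_0=s-r$, the increment from Theorem~\ref{t-4} is
$$\chi_{\lambda^+}-\chi_\lambda=(\mu+\mu^{-1}-2)\bigl(\mu^{c_0}G_\lambda(\mu^{-1})+\mu^{-c_0}G_\lambda(\mu)+1\bigr)+\mu^{c_0}+\mu^{-c_0},$$
while on the hook-sum side the new corner contributes $\mu+\mu^{-1}$ (its hook is $1$) and the boxes in row $r$ left of $(r,s)$ and in column $s$ above $(r,s)$ each have their hook increase by $1$; using $\lambda_r=s-1$ and $\lambda'_s=r-1$, this additional contribution reads
$$(\mu-1)\left[\mu^{c_0}\sum_{t=1}^{s-1}\mu^{\lambda'_t-t}+\mu^{-c_0}\sum_{t=1}^{r-1}\mu^{\lambda_t-t}\right]+(\mu^{-1}-1)\left[\mu^{-c_0}\sum_{t=1}^{s-1}\mu^{t-\lambda'_t}+\mu^{c_0}\sum_{t=1}^{r-1}\mu^{t-\lambda_t}\right].$$

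The principal obstacle is matching these two increments. The partial sums $\sum_{t<s}\mu^{\lambda'_t-t}$ and $\sum_{t<r}\mu^{\lambda_t-t}$, after the shifts by $\mu^{\pm c_0}$, reassemble (using $\lambda_r=s-1$ and $\lambda'_s=r-1$) into exactly the terms produced by the Theorem~\ref{t-4} increment when one expands using $(1-\mu)G_\lambda(\mu)=\sum_{i=1}^\ell(\mu^{1-i}-\mu^{\lambda_i+1-i})$ and its $\mu\leftrightarrow\mu^{-1}$ image; the inductive step then reduces to a finite Laurent polynomial identity that can be verified by direct comparison of coefficients. Conceptually the resulting equality $\chi_\lambda(\mu)=\sum_{\Box}(\mu^{h(\Box)}+\mu^{-h(\Box)})$ is the tangent space character of $\mathrm{Hilb}^n(\mathbb{C}^2)$ at the torus fixed point $\lambda$ specialized to $(t_1,t_2)=(\mu,\mu^{-1})$, consistently with the identification of $\mathcal C_n^{\mathbb C^\times}$ with the Hilbert scheme fixed points via Proposition~\ref{p-5}.
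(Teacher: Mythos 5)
Your proposal is correct in outline but follows a genuinely different route from the paper. The paper proves the identity \eqref{identity} directly, by induction on the length (number of rows) of $\lambda$ using the recursive procedure of Conti and Masoero, with the one-hook case $\lambda=(a|l)$ worked out explicitly; the passage to hook lengths, $\lambda_{l(\Box)+1}-c(\Box)=h(\Box^*)$, appears only afterwards, as the Remark containing \eqref{eq}. You instead make the hook-length sum the pivot: your reindexing of the right-hand side of \eqref{identity} is exactly the content of \eqref{eq}, and you then prove $\chi_\lambda(\mu)=\sum_{\Box\in\lambda}(\mu^{h(\Box)}+\mu^{-h(\Box)})$ by induction on $|\lambda|$, adding one box at a time. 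This is in effect a combinatorial counterpart of Nekrasov's argument in the Appendix, which obtains the same hook-length formula \eqref{hooks} from the weights of the circle action on the tangent space to the Hilbert scheme; indeed your target identity is the specialization $t_1=\mu$, $t_2=\mu^{-1}$ of the standard tangent-space character $\sum_{\Box}\bigl(t_1^{\,l(\Box)+1}t_2^{-a(\Box)}+t_1^{-l(\Box)}t_2^{\,a(\Box)+1}\bigr)$, so an alternative is simply to cite that classical fact. What your route buys is independence from the Conti--Masoero recursion and a direct explanation of why the answer is the hook multiset. The one caveat: the heart of your induction --- that the increment $(\mu-2+\mu^{-1})\bigl(\mu^{c_0}G_\lambda(\mu^{-1})+\mu^{-c_0}G_\lambda(\mu)+1\bigr)+\mu^{c_0}+\mu^{-c_0}$ equals the hook-sum increment at an addable corner $(r,s)$ --- is asserted (``verified by direct comparison of coefficients'') rather than carried out, and it is the entire content of the inductive step. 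The two expressions do agree, and the cancellation uses only the relations $\lambda_r=s-1$ and $\lambda'_s=r-1$ together with the row and column expansions of $(1-\mu)G_\lambda(\mu)$ and $(1-\mu)G_\lambda(\mu^{-1})$, but in a final write-up this telescoping must be displayed explicitly; as it stands the step is a plausible claim, not a proof.
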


The proof is by induction in the length of the partition using the recursive procedure from \cite{CM}.

In particular, for the one-hook Young diagram $\lambda=(a|l)$ (in Frobenius notations) we have 
$$
G_{(a|l)}(\mu)=\frac{\mu^{a+1}-\mu^{-l}}{\mu-1},
$$
$$
\chi_{(a|l)}(\mu)=-(\mu^{a+1}-\mu^{-l})(\mu^{-(a+1)}-\mu^{l})+\frac{\mu^{a+1}-\mu^{-l}}{\mu-1}+\frac{\mu^{-(a+1)}-\mu^{l}}{\mu^{-1}-1}
$$
$$
=\sum_{j=-l, \, j\neq 0}^a(\mu^j+\mu^{-j})+\mu^{a+l+1}+\mu^{-(a+l+1)}.
$$

\begin{cor} 
The Conti--Masoero conjecture holds for any simple locus potential $u_\lambda(z)$.
\end{cor}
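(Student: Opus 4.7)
The plan is to combine Theorem~\ref{t-4} and Proposition~\ref{p-10} with the stated formula $\mathit{Spec}\,K(\lambda)=\{s_1(\lambda)^2,\dots,s_n(\lambda)^2\}$, which identifies the squared frequencies of the linearized Hamiltonian flow at the equilibrium with the squared weights of the symplectic $\mathbb C^\times$-action at the corresponding fixed point of $\mathcal C_n$. Once this identification is granted, the corollary reduces to reading off the weight multiset $S_\lambda$ from the character $\chi_\lambda(\mu)$ and matching it, box by box, with the expression $\lambda_{l(\Box)+1}-c(\Box)$.

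First, because the $\mathbb C^\times$-action on $\mathcal C_n$ is symplectic, the weight set $S_\lambda$ is stable under $s\mapsto -s$, so by the definition of $\chi_\lambda$ together with the convention $s_k(\lambda)\in\mathbb Z_{>0}$ one has
\[
\chi_\lambda(\mu)=\sum_{k=1}^{n}\bigl(\mu^{s_k(\lambda)}+\mu^{-s_k(\lambda)}\bigr).
\]
Second, Proposition~\ref{p-10} rewrites the very same character as
\[
\chi_\lambda(\mu)=\sum_{\Box\in\lambda}\bigl(\mu^{\lambda_{l(\Box)+1}-c(\Box)}+\mu^{-(\lambda_{l(\Box)+1}-c(\Box))}\bigr).
\]
To conclude that the two multisets of positive exponents agree, I would first check the positivity $\lambda_{l(\Box)+1}-c(\Box)>0$ for every $\Box\in\lambda$. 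Writing $\Box=(i,j)$ and letting $\lambda'$ denote the conjugate partition, $l(\Box)=\lambda'_j-i$, so $l(\Box)+1\le\lambda'_j$ and hence $\lambda_{l(\Box)+1}\ge\lambda_{\lambda'_j}\ge j$, while $c(\Box)=j-i$; thus $\lambda_{l(\Box)+1}-c(\Box)\ge i\ge 1$.

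With both Laurent polynomials invariant under $\mu\leftrightarrow\mu^{-1}$ and supported on nonzero integer exponents with nonnegative multiplicities, matching coefficients gives the equality of multisets
\[
\{s_k(\lambda)\}_{k=1}^{n}=\{\lambda_{l(\Box)+1}-c(\Box)\}_{\Box\in\lambda}.
\]
Squaring and invoking the Hessian formula yields
\[
\mathit{Spec}\,K(\lambda)=\{(\lambda_{l(\Box)+1}-c(\Box))^2,\ \Box\in\lambda\},
\]
which is Conjecture~6.3 of \cite{CM}. The only nontrivial steps are Proposition~\ref{p-10} itself (which the paper establishes by induction on partition length) and the positivity check that ensures the decomposition of $\chi_\lambda$ into positive and negative parts is unambiguous; the remaining work is bookkeeping.
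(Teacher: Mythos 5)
Your proposal is correct and follows essentially the same route as the paper: the corollary is obtained by combining Theorem~\ref{t-4} and Proposition~\ref{p-10} with the stated identification $\mathit{Spec}\,K(\lambda)=\{s_1(\lambda)^2,\dots,s_n(\lambda)^2\}$, then matching the two symmetric Laurent expansions of $\chi_\lambda$. Your explicit positivity check $\lambda_{l(\Box)+1}-c(\Box)\ge i\ge 1$ is a worthwhile detail the paper leaves implicit (it is consistent with the paper's remark that this quantity equals the hook length $h(\Box^*)$), and the rest is the same bookkeeping.
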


\begin{remark}
  As it was pointed to us by Nikita Nekrasov \cite{NN}, the set of eigenvalues of $K(\lambda)$ can be written in much more natural way as the set of 
    squares of the  hook lengths of $\lambda$:
\Beq{hooks}
\mathit{Spec} \, K(\lambda)=\{h(\Box)^2,\quad \Box \in \lambda\},
\Eeq
where for the box $\Box=(i,j)\in \lambda$ the hook length $h(\Box)$ is defined by $$h(\Box)=\lambda_i-j+\lambda^*_j-i+1$$ with the Young diagram $\lambda^*$ being conjugate to 
$\lambda.$ Indeed, it is easy to check that we have an equality
\Beq{eq}
\lambda_{l(\Box)+1}-c(\Box)=h(\Box^*)
\Eeq
where for any $\Box=(i,j)\in \lambda$ the box $\Box^*$ is defined by $(i,j)^*=(\lambda^*_j-i+1, j).$
 
 Nekrasov's proof is independent and based on his earlier work on the ADHM construction of instantons (see the Appendix to this paper).
\end{remark}

\section{Concluding remarks}

For the simple part of the harmonic locus we have now an effective way to
invert the Wronskian map $\lambda \to u_\lambda(z)$ by computing the
spectrum of the corresponding Moser matrix $M.$ The question about a
correct analogue of the matrix $M$ when we have multiplicities is still
open. Note that there is a conjecture saying that for the harmonic
locus only $z=0$ may have multiplicity larger than 1 (see
\cite{FHV12}), but in spite of a large numerical evidence and some
recent progress in this direction \cite{Duran, Grosu} the general
proof is still to be found.

In the case of the so-called doubled partitions
$\lambda^{2}=(\lambda_1,\lambda_1, \dots, \lambda_l,\lambda_l)$ we
have observed in \cite{FHV12} a surprising relation between the shape
of the Young diagram of $\lambda$ and the pattern of zeroes of the
corresponding Hermite Wronskian, which allows to ``see" 2 copies of
the corresponding partition $\lambda$ (see Fig. \ref{f-1}).

\begin{figure}[h]
\centerline{ \includegraphics[width=.5\textwidth]{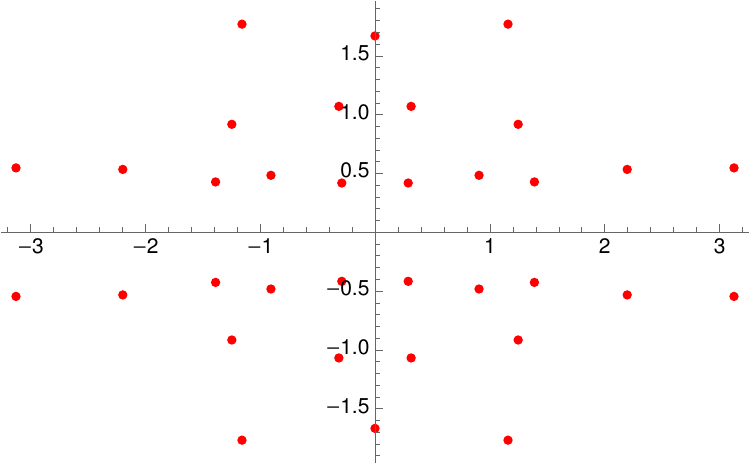} }
\caption{Zeroes of the Wronskian $W_{\lambda^2}$ for the doubled partition $\lambda^{2}$ with $\lambda=(10,4,3)$} \label{compa1}\label{f-1}
\end{figure}

Note that due to the results of M.G. Krein and V.E. Adler \cite{Adler} such partitions can be characterised by the property that the corresponding Hermite Wronskian has no real roots (and thus conjecturally has only simple roots).

The harmonic locus is related to the so-called {\it monster potentials} 
$$
V(x)= \frac{L}{x^2}+x^{2\alpha}-2D^2\sum_{k=1}^{n}\log (x^{2\alpha +2}-z_k)
$$
introduced by Bazhanov, Lukyanov and Zamolodchikov \cite{BLZ}. When
$\alpha=1$ and $L=m(m+1)$ they coincide with the simple locus
potentials $u_\lambda(x)$ corresponding to the symmetric Young
diagrams $\lambda=\lambda^*.$ There are several outstanding
conjectures about the number of monster potentials, which are open
even in this case (see \cite{CM} for some recent results in this
direction). Our results might be useful in this setting as well.

\section{Acknowledgements}

APV is grateful to FIM, ETH Zurich for the hospitality in April 2024 when this work was completed.

\appendix

\newcommand{\nn}[1]{{\color{blue}\bf NN: {#1}}}

\newcommand{\boxit}[1]{\vbox{\hrule\hbox{\vrule\kern8pt
\vbox{\hbox{\kern8pt}\hbox{\vbox{#1}}\hbox{\kern8pt}}
\kern8pt\vrule}\hrule}}
\newcommand{\mathboxit}[1]{\vbox{\hrule\hbox{\vrule\kern8pt\vbox{\kern8pt
\hbox{$\displaystyle #1$}\kern8pt}\kern8pt\vrule}\hrule}}

\newcommand{\picit}[2]{\includegraphics[width=#1cm]{#2.eps}}

\newcommand{\gcr}{{\mathfrak X}} 

 \newcommand{\beq}{\begin{equation}}
                \newcommand{\bea}{\begin{eqnarray}}
                \newcommand{\eea}{\end{eqnarray}}
                 \newcommand{\eeq}{\end{equation}}  

\newcommand{\vev}[1]{\left\langle\ #1 \ \right\rangle}
\newcommand{\nvev}[1]{\biggl\langle\biggl\langle\, #1 \,\biggr\rangle\biggr\rangle}
\newcommand{\my}[1]{{\mathscr Y} \left( #1 \right)}
\newcommand{\y}{{\mathscr Y}}
\newcommand{\q}{{\mathscr Q}}
\newcommand{\x}{{\mathscr X}}
\newcommand{\kzD}{{\mathscr D}^{\qe}}
\newcommand{\mA}{{\mathscr A}}
\newcommand{\mN}{{\mathscr N}}
\newcommand{\mV}{{\mathscr V}}
\newcommand{\mU}{{\mathscr U}}
\newcommand{\cV}{{\mathcal{V}}}
\newcommand{\iM}{{\mathscr M}}
\newcommand{\Lf}{{\mathfrak{L}}}
\newcommand{\Sf}{{\mathfrak{S}}}
\newcommand{\sM}{{\mathfrak{M}}}
\newcommand{\sV}{{\mathfrak{V}}}
\newcommand{\sE}{{\mathfrak{E}}}
\newcommand{\uM}{\overline{\mathcal M}}
\newcommand{\mM}{{\mathfrak M}}
\newcommand{\mW}{{\mathscr W}}
\newcommand {\BA}   {\mathbb A}
\newcommand {\BB}   {\mathbb B}
\newcommand {\BD}   {\mathbb D}
\newcommand {\BC}   {\mathbb C}
\newcommand {\BF}   {\mathbb F}
\newcommand {\BI}   {\mathbb I}
\newcommand {\BO}   {\mathbb O}
\newcommand {\BH}   {\mathbb H}
\newcommand {\BN}   {\mathbb N}
\newcommand {\BL}   {\mathbb L}
\newcommand {\BP}   {\mathbb P}
\newcommand {\BQ}   {\mathbb Q}
\newcommand {\BT}   {\mathbb T}
\newcommand {\bB}   {\mathbf{B}}
\newcommand {\bI}   {\mathbf{I}}
\newcommand {\bN}   {\mathbf{N}}
\newcommand {\bM}   {\mathbf{M}}
\newcommand {\bJ}   {\mathbf{J}}
\newcommand {\bP}   {\mathbf{P}}
\newcommand {\bG}   {\mathbf{G}}
\newcommand {\bQ}   {\mathbf{Q}}
\newcommand {\bR}   {\mathbf{R}}
\newcommand {\bq}   {\mathscr P}
\newcommand {\mv}   {\mathscr M}
\newcommand {\qe} {\mathfrak q}
\newcommand {\pe} {\mathfrak p}
\newcommand {\gf} {\mathfrak g}
\newcommand {\ib} {\bar{i}}
\newcommand {\nf} {\mathfrak n}
\newcommand {\jb} {\bar{j}}
\newcommand {\kb} {\bar{k}}
\newcommand {\Pf} {\mathfrak{P}}
\newcommand {\Xf} {\mathfrak{X}}
\newcommand {\Yf} {\mathfrak{Y}}
\newcommand {\Hf} {\mathsf{H}}
\newcommand {\Hadhm} {\mathsf{H_{ADHM}}}
\newcommand{\Hilb} {\mathsf{Hilb}}
\newcommand{\HM} {\mathsf{HM}}
\newcommand{\Ci} {\mathsf{C}_{\ib, \bv}}
\newcommand {\Jf} {\mathfrak{J}}
\newcommand {\Gp} {\mathsf{G}}
\newcommand {\Gpl} {\mathsf{G}^{\rm loc}}
\newcommand {\hf} {\mathsf{h}}
\newcommand {\xb} {\mathbf{x}}
\newcommand {\ii} {\mathrm{i}}
\newcommand {\bT}   {\mathbf{T}}
\newcommand {\ba}  {\underline{\ac}}
\newcommand {\bb} {\mathbf{b}}
\newcommand {\ex}  {\mathbf{e}}
\newcommand {\Luv}{{\beta_1}}
\newcommand {\bg} {\mathbf{g}}
\newcommand {\bbr}{  \mathbf{r}}
\newcommand {\bn}{\underline{\mathbf{n}}}
\newcommand {\mm}{\underline{\mathbf{m}}}
\newcommand {\mt} {\tt m}
\newcommand {\Det} {\tt Det}
\newcommand {\dt} {\tt d}
\newcommand {\bc} {\underline{\mathbf{c}}}
\newcommand {\bv} {\underline{\mathbf{v}}}
\newcommand {\bk}{  \mathbf{k}}
\newcommand {\bl}{  \mathbf{l}}
\newcommand {\bw} {\underline{\mathbf{w}}}
\newcommand {\vt} {\vartheta}
\newcommand {\wt} {\tt w}
\newcommand {\xt} {\tt x}
\newcommand {\tw} {\text{w}}

\newcommand {\bfc} {\underline{\fc}}
\newcommand {\bnu} {\underline{\boldsymbol{\nu}}}
\newcommand {\bmu} {\boldsymbol{\mu}}
\newcommand {\bmt} {\underline{\boldsymbol{\mt}}}
\newcommand {\bzt} {\underline{\boldsymbol{\zeta}}}
\newcommand {\btt} {\underline{\boldsymbol{\tau}}}
\newcommand {\ept} {\underline{\ec}}
\newcommand {\rv} {\underline{\rho}}
\newcommand {\bzv} {\underline{\mathbf{z}}}
\newcommand {\bpv} {\underline{\mathbf{p}}}
\newcommand {\bqt} {\underline{\qe}}
\newcommand {\Dq} {\mathscr D^{\qe}}
\newcommand {\Nq} {\mathscr\nabla^{\qe}}
\newcommand {\bkt} {\underline{\bk}}
\newcommand {\bnt} {\underline{\bf {\vec n}}}
\newcommand {\blt} {\underline{\bl}}
\newcommand {\bla} {\underline{\boldsymbol{\lambda}}}
\newcommand {\bwt}{ \mathbf{\tilde w}}
\newcommand {\bu}{ \mathbf{u}}
\newcommand {\bU}{ \mathbf{U}}
\newcommand {\bx}{  \mathbf{x}}
\newcommand {\bX}{ \mathbf{X}}
\newcommand {\bz}{  \mathbf{z}}
\newcommand {\xr}{  \mathrm{x}}
\newcommand {\bS}{ \mathbf{S}}
\newcommand {\BS}   {\mathbb S}
\newcommand {\BDe}   {\boldsymbol{\Delta}}
\newcommand {\BW}   {\mathbb W}
\newcommand {\BZ}   {\mathbb Z}

\newcommand {\RP}   {\mathbb R \mathbb P}
\newcommand {\CP}   {\mathbb C \mathbb P}
\newcommand {\WP}   {\mathbb W \mathbb P}

\newcommand {\Unity}{\mathbf 1}
\newcommand{\Cross}{\mathbin{\tikz [x=1.4ex,y=1.4ex,line width=.2ex] \draw (0,0) -- (1,1) (0,1) -- (1,0);}}%

\newcommand {\bsa} {\underline{\boldsymbol{a}}}

\newcommand {\ac} {\mathfrak{a}}
\newcommand {\ec} {\mathfrak{e}}
\newcommand {\fb} {\mathfrak{b}}
\newcommand {\fc} {\mathfrak{c}}
\newcommand {\fC} {\mathfrak{C}}
\newcommand {\tqe}{\tilde{\mathfrak{q}}}
\newcommand {\fe} {\mathfrak{f}}
\newcommand {\fu} {\mathfrak{u}}
\newcommand {\ma} {\mathfrak{m}}
\newcommand {\zb} {{\bar z}}
\newcommand {\yb} {{\bar y}}
\newcommand {\vb} {{\bar v}}
\newcommand {\wb} {{\bar w}}
\newcommand {\CalA} {\mathcal A}
\newcommand {\CalB} {\mathcal B}
\newcommand {\CalC} {\mathcal C}
\newcommand {\obs} {\mathscr C}
\newcommand {\CalD} {\mathcal D}
\newcommand {\CalE} {\mathcal E}
\newcommand {\CalF} {\mathcal F}
\newcommand {\CalG} {\mathcal G}
\newcommand {\CalH} {\mathcal H}
\newcommand {\CalI} {\mathcal I}
\newcommand {\CalJ} {\mathcal J}
\newcommand {\CalK} {\mathcal K}
\newcommand {\CalL} {\mathcal L}
\newcommand {\CalN} {\mathcal N}
\newcommand {\CalO} {\mathcal O}
\newcommand {\CalP} {\mathcal P}
\newcommand {\CalQ} {\mathcal Q}
\newcommand {\CalR} {\mathcal R}
\newcommand {\CalS} {\mathcal S}
\newcommand {\CalT} {\mathcal T}
\newcommand {\CalU} {\mathcal U}
\newcommand {\CalV} {\mathcal V}
\newcommand {\CalX} {\mathcal X}
\newcommand {\CalY} {\mathcal Y}
\newcommand {\CalW} {\mathcal W}
\newcommand {\CalZ} {\mathcal Z}

\newcommand {\ee} {\mathfrak E}
\newcommand {\es} {\mathscr E}

\newcommand {\hs} {\mathscr H}

\newcommand{\al}{\alpha}
\newcommand{\ve}{\varepsilon}
\newcommand{\ep}{\epsilon}
\newcommand{\om}{\omega}
\newcommand{\eu}{\mathrm{eu}}

\newcommand{\gym}{g_{\text{\tiny \textsc{ym}}}}


\renewcommand{\hat}{\widehat}

\newcommand{\Gg}{\mathsf{G}_{\mathbf{g}}}
\newcommand{\fgg}{\mathfrak{g}_{\mathbf{g}}}

\newcommand{\tgc}{\mathfrak{t}^{\BC}_{\mathbf{g}}}

\newcommand{\Gf}{\mathsf{G}_{\text{\tiny f}}}
\newcommand{\Gr}{\mathsf{G}_{\text{\tiny rot}}}
\newcommand{\Tf}{T_{\text{\tiny f}}}
\newcommand{\tf}{\mathfrak{t}_{\text{\tiny f}}}
\newcommand{\tfc}{\mathfrak{t}^{\BC}_{\text{\tiny f}}}

\newcommand{\Gammadi}{\boldsymbol{\Gamma}}

\newcommand{\iw}{{^{i}{\CalW}}}

\newcommand{\Gad}{{\mathbf{G}}^{\text{ad}}}
\newcommand{\Tad}{{\mathbf{T}}^{\text{ad}}}

\newcommand{\gq}{\mathfrak{g}_{\gamma}}
\newcommand{\gqp}{\mathfrak{g}_{\gamma'}}
\newcommand{\hq}{\mathfrak{h}_{\gamma}}

\newcommand{\agq}{\mathfrak{\hat g}_{\gamma}}
\newcommand{\tq}{\mathfrak{t}_{\gamma}}
\newcommand{\tqc}{\mathfrak{t}^{\BC}_{\gamma}}
\newcommand{\sdtimes}{\mathbin{
\hbox{\hskip2pt
\vrule height 4.1pt depth -.3pt width.25pt\hskip-2pt$\times$}}}

\newcommand{\Vg}{\mathsf{V}_{\gamma}}
\newcommand{\Vgp}{\mathsf{V}_{\gamma^{+}}}
\newcommand{\Vgm}{\mathsf{V}_{\gamma^{-}}}
\newcommand{\Vgmp}{\mathsf{V}_{\gamma^{\mp}}}
\newcommand{\Vgpm}{\mathsf{V}_{\gamma^{\pm}}}
\newcommand{\Verf}{\mathrm{Vert}_{\text{f}}}
\newcommand{\Eg}{\mathsf{E}_{\gamma}}
\newcommand{\Egp}{\mathsf{E}_{\gamma^{+}}}
\newcommand{\Egpm}{\mathsf{E}_{\gamma^{\pm}}}
\newcommand{\Egm}{\mathsf{E}_{\gamma^{-}}}
\newcommand{\Arr}{\mathsf{Arrows}_{\gamma}}
\newcommand{\Path}{\mathsf{Paths}_{\gamma}}
\newcommand{\Obs}{\mathsf{Obs}}
\newcommand{\Def}{\mathsf{Def}}
\newcommand{\Edgg}{\mathsf{Edge}_{\gamma, \Gamma}}
\newcommand{\am}{\mathrm{a}}
\newcommand{\ca}{\mathrm{c}}
\newcommand{\Tr}{\mathsf{Tr}\,}
\newcommand{\im}{\mathsf{im}\,}
\newcommand{\dg}{\mathsf{deg}}


\newcommand {\3}{\underline{\bf 3}}
\newcommand {\4}{\underline{\bf 4}}
\newcommand {\6}{\underline{\bf 6}}
\newcommand {\Hfr} {\mathfrak{H}}
\newcommand {\be}{\underline{\mathbf{e}}}
\newcommand {\bE}{\underline{\mathbf{E}}}
\newcommand {\fs} {\mathfrak{s}}
\newcommand {\uA}{\underline{\mathbf{A}}}
\newcommand{\fo}{\vert\kern -.03in\_}
\newcommand {\bht}{\underline{\mathbf{h}}}

\section{Eigenvalues of Hessian of CM hamiltonian, by Nikita
Nekrasov}

\subsection{ADHM equations} 

Consider the moduli space ${\CalM}_{k}$ of charge $k$ $U(1)$ instantons on noncommutative ${\BR}^{4}_{\theta}$ \cite{Nekrasov:1998ss}.  It is defined as the quotient of the space of 
 solutions to the deformed ADHM equations 
 \beq
 \begin{aligned}
 & [ B_{1}, B_{1}^{\dagger} ] + [ B_{2}, B_{2}^{\dagger} ] + I I^{\dagger} - J^{\dagger} J = {\zeta} \cdot {\bf 1}_{K} \\
 & [ B_{1} , B_{2} ] + IJ = 0 \\
 \end{aligned}
 \label{eq:defadhm}
 \eeq
 by the action of the group $U(K)$ acting via:
 \beq
 g \cdot ( B_{1}, B_{2}, I, J) = ( g^{-1} B_{1} g, g^{-1} B_{2} g, g^{-1} I, J g )
 \label{eq:ukac}
 \eeq
 on the operators $B_{1,2} \in End(K)$, $I: N \to K$, $J: K \to N$, where $K \approx {\BC}^{k}$ is a $k$-dimensional Hermitian vector space over $\BC$, and $N \approx {\BC}$ is a one-dimensional vector space. 
 \subsection{Family of integrable models}
 
 The space ${\CalM}_{k}$ is hyperk{\"a}hler, of real dimension
 $4k$ with a ${\CP}^{1}$-worth
 of complex structures and holomorphic symplectic forms. 
 Let $u \in {\BC\BP}^{1}$, represented by a pair $({\alpha}: {\beta})$ of complex numbers
 $u = ({\alpha}: {\beta})$, obeying $|{\alpha}|^{2} + |{\beta}|^{2}=1$, 
 and define two matrix-valued and two vector-valued sections of ${\CalO}(1)$ over ${\BC\BP}^{1}$:
 \beq
 \begin{aligned}
& B_{1}^{u} = {\alpha} B_{1} + {\beta} B_{2}^{\dagger} \, , \quad B_{2}^{u} = {\alpha} B_{2} - {\beta} B_{1}^{\dagger} \, , \\
& I^{u} = {\alpha} I + {\beta} J^{\dagger} \, , \quad J^{u} = {\alpha} J - {\beta} I^{\dagger} 
\end{aligned}
\label{eq:hkrot}
\eeq
obeying
\beq
[ B_{1}^{u}, B_{2}^{u} ] + I^{u} J^{u} = - {\alpha}{\beta} {\zeta} \cdot 
{\bf 1}_{K}
\label{eq:muc}
\eeq
and
\beq
[ B_{1}^{u}, (B_{1}^{u})^{\dagger}] + 
[B_{2}^{u}, (B_{2}^{u})^{\dagger} ] + I^{u} (I^{u})^{\dagger} - 
(J^{u})^{\dagger} J^{u} =  (|{\alpha}|^{2} - |{\beta}|^{2}) {\zeta} \cdot 
{\bf 1}_{K}
\label{eq:mur}
\eeq
The $u$-holomorphic functions on ${\CalM}_{k}$ are the 
$U(K)$-invariant holomorphic functions (polynomials) in 
$B_{1}^{u}, B_{2}^{u}, I^{u}, J^{u}$. It is well-known 
\cite{NikThesis, Gorsky:1999rb} that
${\CalM}_{k}$ is the phase space of the complexified
rational Calogero-Moser system, with or without the oscillator 
potential.  For the purposes of this note we would like to study the 
dynamics on ${\CalM}_{k}$ generated by 
\beq
H^{u} = {\rm tr}_{K} B_{1}^{u} B_{2}^{u}
\label{eq:hu2}
\eeq
with respect to the symplectic form ${\varpi}^{u}_{\BC}$ on ${\CalM}_{k}$
which descends from
\beq
{\Omega}^{u}_{\BC} = - \ii\, {\rm tr}\left( 
{\delta} B_{1}^{u} \wedge {\delta} B_{2}^{u} +
 {\delta} I^{u} \wedge {\delta} J^{u} \right)
\label{eq:omcu}
\eeq
by symplectic reduction. The latter can be described explicitly by 
going to the gauge where $X^{u}$, defined by
\beq
X^{u} = \frac{B_{1}^{u} + B_{2}^{u}}{\sqrt{2}} = {\alpha} {\bf X} -  {\ii\beta} {\bf P}^{\dagger} 
\label{eq:xmat}
\eeq
is diagonal
\beq
X^{u} = {\rm diag} \left( x_{1}, \ldots, x_{k} \right) \, , \ x_{i} \in {\BC}
\label{eq:xmat}
\eeq
so that $P$, defined by
\beq
P^{u}  = \frac{B_{1}^{u} - B_{2}^{u}}{\ii \sqrt{2}} = {\alpha} {\bf P} - {\ii\beta} {\bf X}^{\dagger}
\label{eq:pmat}
\eeq
is found to be the Lax operator of the rational Calogero-Moser system
\beq
P = \Vert p_{i} {\delta}_{ij} + \frac{\ii\nu}{x_{i} -x_{j}} (1-{\delta}_{ij}) \Vert_{i,j=1}^{k}\, , \qquad p_{i} \in {\BC}
\label{eq:plax}
\eeq
where ${\nu} = -  {\alpha\beta\zeta}$. The Hamiltonian \eqref{eq:hu2} becomes
\begin{multline}
H^{u} = \frac 12 \sum_{i=1}^{k} p_{i}^{2}  + {\CalU}({\bx}) \, , \\
{\CalU}({\bx}) =   \frac 12 \sum_{i=1}^{k} x_{i}^{2}  + \sum_{i < j} \frac{\nu^2}{(x_{i}-x_{j})^2}
\label{eq:hupx}
\end{multline}
while the symplectic form $\varpi^{u}$ is easily computed to be
\beq
{\varpi}^{u} = \sum_{i=1}^{k} dp_{i} \wedge dx_{i}
\eeq
The dynamics generated by \eqref{eq:hu2} looks simple before the reduction 
\beq
(B_1^u, B_2^u, I^u, J^u) \mapsto ( e^{\ii t} B_1^u , e^{-\ii t} B_2^u , I^u, J^u) 
\label{eq:cstar}
\eeq 

\subsection{Reality check} Now, for real $t$ the dynamics \eqref{eq:cstar} is actually an isometric action of the group $U(1)$, corresponding to the symmetry 
\beq
(B_{1}, B_{2}, I, J) \mapsto ( e^{{\ii} t} B_{1}, e^{-{\ii} t} B_{2}, I , J)
\label{eq:hbar}
\eeq 
of the original ADHM equations \eqref{eq:defadhm}, as is obvious from \eqref{eq:hkrot}. As shown in \cite{Nakajima:book} and used extensively in \cite{Nekrasov:2002qd}, the fixed points 
of the \eqref{eq:hbar} action are in one-to-one correspondence with partitions $\lambda$ of $k$, $|{\lambda}|=k$, with the linearization of \eqref{eq:hbar} on the tangent space to ${\CalM}_{k}$
given by the sum of $k$ copies of ${\BC}^{2}$ spaces, in one-to-one correspondence with the boxes ${\square} = (i,j)$, $1 \leq j \leq {\lambda}_{i}$, $1 \leq i \leq {\lambda}_{j}^{t}$ of $\lambda$,  where $U(1)$ acts with the weights $(h_{\square}, - h_{\square})$, where
\beq
h_{\square} = {\lambda}_{i} - j + {\lambda}_{j}^{t} - i +1
\label{eq:hook}
\eeq
is the hook-length of $\square$ \footnote{The fastest way to compute \eqref{eq:hook} uses a $U(1) \times U(1)$ isometric action on ${\CalM}_{k}$ of which only the $U(1)$ we discuss preserves the hyperk{\"a}hler structure}. It is interesting to note, that for $\zeta >0$, $J = 0$ (as demonstrated in \cite{Nakajima:book}), while $(B_1, B_2, I)$ can be described quite simply, modulo $GL(K)$-action, as operators of multiplication by $(z_1, z_2)$ and the image of function $1$, respectively, on the space $K = {\BC}[z_{1}, z_{2}]/{\CalI}$ associated with the codimension $k$ ideal $\CalI \subset {\BC}[z_1, z_2]$ in the ring of polynomials of two variables.  

The fixed points of the $U(1)$ action are also the stationary points of the Hamiltonian flow generated by \eqref{eq:hu2}.  The squares of the weights of the $U(1)$ action on the tangent space to the fixed point 
coincide with the eigenvalues of the Hessian $\frac 12 \frac{\partial^2 {\CalU}}{\partial x_{i} \partial x_{j}}$ of the potential  \eqref{eq:hupx} evaluated at its critical point, since the $p_i$-dependence of $H^u$ is a simple quadratic function\footnote{For more general Hamiltonian $H$ with a non-degenerate critical point $p$ the linearized dynamics near
$p$ has frequencies $\omega$, which are computed as the eigenvalues of ${\ii}{\varpi}^{-1} {\partial^{2} H} \vert_{p}$}.
It is a non-trivial and quite an interesting problem to describe explicitly the hermitian metric on $K$ for which \eqref{eq:defadhm} is satisfied, with $J =0$, $\zeta > 0$, and $(B_1, B_2, I)$ corresponding to the monomial ideal. In \cite{Braden:1999zp} this problem was solved, using the connection to the Calogero-Moser system, for special $\lambda = (1)^{k}$ or ${\lambda} = (k)$. 
It would be nice to try the insights of the present paper by G.~Felder and A.~Veselov to address the general case.

\subsection{Special $u$} Of all choices of the complex structure $u$ a circle $|{\alpha}| = | {\beta} | = \frac{1}{\sqrt{2}}$ is special, as the right hand side of the real moment map \eqref{eq:mur} vanishes. In this case finding a $GL(K)$ representative $(g^{-1} B_{1}^{u} g, g^{-1} B_{2}^{u} g, g^{-1} I^{u}, J^{u} g)$ solving \eqref{eq:mur} is equivalent to minimizing the norm
\begin{multline}
\Vert {\bf B}, {\bf I}, {\bf J} \Vert^2 = {\rm tr} \left( B_{1}^{u}(B_{1}^{u})^{\dagger} + B_{2}^{u} (B_{2}^{u})^{\dagger} + I^{u}(I^{u})^{\dagger} +  (J^{u})^{\dagger} J^{u} \right)
 = \\
 {\rm tr} \left( H^{-1} X H X^{\dagger} + H^{-1} P H P^{\dagger} + H^{-1} I I^{\dagger} \right)
 \label{eq:normh}
 \end{multline}
 where $X$ and $P$ are given by \eqref{eq:xmat}, \eqref{eq:plax}, with
 \beq
 X^{\dagger} = {\rm diag}( {\bar x}_{1}, \ldots, {\bar x}_{k} ) \, , \ P^{\dagger} = \Vert {\bar p}_{i} {\delta}_{ij} +  \frac{\ii {\bar \nu}}{{\bar x}_{i} - {\bar x}_{j}} (1-{\delta}_{ij}) \Vert_{i,j=1}^{k} 
 \label{eq:xpmatherm}
 \eeq
 and $H$ is the Hermitian matrix in $K$ to be found from minimizing \eqref{eq:normh}.

\subsection{Generalizations}

There are several generalizations of the Calogero-Moser system associated with quiver varieties and (generalized) moduli spaces of instantons. One generalization, the so-called spin CM system, is defined on the moduli space ${\CalM}_{k}(n)$ of rank $n$ framed torsion free sheaves on ${\BC\BP}^{2}$, or the $U(n)$ instantons on the noncommutative ${\BR}^{4}_{\theta}$. The difference with \eqref{eq:defadhm} is that the space $N$ is now $n$-dimensional, and $I: N \to K$ and $J: K \to N$ are non-zero even for $\zeta >0$. Instead of a many-body system of $k$ particles one gets a system of $k$ particles with ``spin'', with the phase space being a resolution of singularities of $\left( S_{n-1} \times {\BC}^{2} \right)^{k}/S(k)$, with $S_{n-1}$ being the  space of equivalence classes of pairs $(v, u)$, where $u \in N$ is a vector, $v \in N^{*}$ a covector, $v(u) = {\nu}$ and $(v,u) \sim (t v, t^{-1} u)$, $t \in {\BC}^{\times}$.  The Hamiltonian, generating the action of the maximal torus ${\BC}^{\times} \times ( {\BC}^{\times} )^{n-1}$, is given by
\beq
H^{u} = {\rm tr} B_{1}^{u} B_{2}^{u} + {\rm tr} I A J
\eeq
where $A = {\rm diag}(a_{1}, \ldots, a_{n})$ is a diagonal matrix acting in the $N$ space. Upon solving the moment map equations, again, in the gauge where $X$ is diagonal, we find the Hamiltonian describing $k$ spins interacting both with each other and the external magnetic field: 
\beq
H^{u} = \frac 12 \sum_{i=1}^{k} \left( p_{i}^{2} + x_{i}^{2}  \right) + \sum_{i < j}  \sum_{{\alpha},{\beta}=1}^{n} \frac{S_{i \alpha}^{\beta} S_{j \beta}^{\alpha}}{(x_{i}-x_{j})^2} + \sum_{\alpha = 1}^{n} a_{\alpha} \sum_{i=1}^{k} S_{i \alpha}^{\alpha}
\label{eq:spin}
\eeq
where ${\bf S}_{i} = \left( S_{i \alpha}^{\beta} \right)$ for $i = 1, \ldots, k$ are the generators of $k$ copies of $\mathfrak{sl}_{n}$ obeying
\beq
\{ S_{i \alpha}^{\beta} , S_{j \alpha'}^{\beta'} \} = {\delta}_{ij} \left( {\delta}^{\beta}_{\alpha'} S_{i \alpha}^{\beta'}  - {\delta}^{\beta'}_{\alpha} S_{i \alpha'}^{\beta} \right)
\eeq
with ${\bf S}_{i} - \frac{\ii \nu}{n} {\bf 1}_{N} = \Vert I_{i \alpha} J^{\beta}_{i} \Vert_{\alpha, \beta = 1}^{n}$ being rank one matrix.  We deduce using \cite{Nekrasov:2002qd} that the stationary configurations of \eqref{eq:spin} are in one-to-one
correspondence with $n$-tuples $({\lambda}^{({\alpha})})_{\alpha = 1}^{n}$ of partitions, obeying
\beq
\sum_{\alpha = 1}^{n} | {\lambda}^{({\alpha})} | = k
\eeq
The $2 n k$ eigen-frequencies of oscillations near the stationary points are given by (again, borrowing from \cite{Nekrasov:2002qd} the calculation of the weights of the torus action on the tangent space to ${\CalM}_{k}(n)$ at the fixed point $({\lambda}^{({\alpha})})$):
\beq
\pm ( a_{\alpha} - a_{\beta} + i + j - 1 - {\lambda}^{({\beta})t}_{j} - {\lambda}^{({\alpha})}_{i} )\, , 1 \leq {\alpha}, {\beta} \leq n\, , \ (i,j) \in {\lambda}^{({\beta})}  
\eeq 
Another generalization is the higher $g > 1$ genus analogue ${\mathfrak{M}}_{g,k}$ of the quiver variety
(it arises in the studies of the Coulomb branches of the moduli spaces of vacua of three dimensional theories with ${\CalN}=4$ supersymmetry). In its simplest form, it is the space of stable $2g$-tuples of $k \times k$ matrices $\mathfrak{A}_{I}, \mathfrak{B}_{I}$, $I = 1, \ldots , g$, obeying
\beq
\sum_{I=1}^{g} [ \mathfrak{A}_{I}, \mathfrak{B}_{I} ] = 0
\eeq
modulo the $GL(k)$-symmetry $(\mathfrak{A}_{I}, \mathfrak{B}_{I})_{I=1}^{g} \mapsto ( G^{-1} \mathfrak{A}_{I} G,  G^{-1}\mathfrak{B}_{I} G)_{I=1}^{g}$. It is a holomorphic symplectic manifold with the symplectic form descending from
\beq
{\Omega} = \sum_{I=1}^{g} {\rm tr} {\delta} {\mathfrak{A}}_{I}  \wedge  {\delta} {\mathfrak{B}}_{I} 
\eeq
We define the Hamiltonian 
\beq
H = \sum_{I=1}^{g} {\ve}_{I} {\rm tr} \left(  \mathfrak{A}_{I} \mathfrak{B}_{I} \right)
\eeq
which generates a one-parametric subgroup in the torus $({\BC}^{\times})^{g}$ of symplectic symmetries 
of ${\mathfrak{M}}_{g,k}$. We have no idea how to effectively classify its stationary points, though. Perhaps the ideas from \cite{Ko2} might be useful.

\end{document}